\newcommand{\modelname}{IKE}
\newcommand{\modelnamevd}{IKE$_{VD}$}
\newtheorem{prop}{Proposition}
\newtheorem{thm}{Theorem}
\newtheorem{definition}{Definition}
\newtheorem{col}{Corollary}
\title{LLMs Meet Isolation Kernel: \\ An (Almost) Learning-free Binary Embeddings for Fast Retrieval}
\title{LLMs Meet Isolation Kernel: \\ Lightweight, Learning-free Binary Embeddings for Fast Retrieval}
\author{Zhibo Zhang\textsuperscript{1 2}  \quad  Yang Xu\textsuperscript{1 2}  \quad  Kai Ming Ting\textsuperscript{1 2}  \quad  Cam-Tu Nguyen\textsuperscript{1 2}\thanks{Corresponding author} \\
        \textsuperscript{1}National Key Laboratory for Novel Software Technology, Nanjing University \\
        \textsuperscript{2}School of Artifcial Intelligence, Nanjing University \\
        zhibozhang@smail.nju.edu.cn, xuyang@lamda.nju.edu.cn,
        \{tingkm, ncamtu\}@nju.edu.cn
}
\begin{document}
\maketitle
\begin{abstract}

Large language models (LLMs) have recently enabled remarkable progress in text representation. However, their embeddings are typically high-dimensional, leading to substantial storage and retrieval overhead. Although recent approaches such as Matryoshka Representation Learning (MRL) and Contrastive Sparse Representation (CSR) alleviate these issues to some extent, they still suffer from retrieval accuracy degradation.
This paper proposes \emph{Isolation Kernel Embedding} or \modelname{}, a learning-free method that transforms an LLM embedding into a binary embedding using Isolation Kernel (IK). 
Lightweight and based on binary encoding, IKE offers a low memory footprint and fast bitwise computation, lowering retrieval latency. 
Experiments on multiple text retrieval datasets demonstrate that \modelname{} offers up to 16.7\(\times\) faster retrieval and 16\(\times\) lower memory usage than the original LLM embeddings, while maintaining comparable accuracy.
Theoretically, we show that \modelname{} works because it satisfies four essential criteria for effective binary hashing that other methods do not possess.
Compared to CSR, \modelname{} consistently achieves better retrieval efficiency and effectiveness.
IKE also works effectively with graph-based indexing, demonstrating its superiority in balancing accuracy and latency compared to alternative compression techniques in the approximate nearest neighbor (ANN) search setting.

\end{abstract}

\section{Introduction}\label{sec:inrto}

In the last decade, pretrained language models have significantly advanced text representation. Early approaches relied on encoder-based models such as BERT \cite{bert} and RoBERTa \cite{roberta}. Recently, there has been a paradigm shift toward leveraging large language models (LLMs) for text representation with notable examples including LLM2Vec \cite{llm2vec}, Llama2Vec \cite{llama2vec}, and Qwen Embedding \cite{qwen3embedding}. These LLM embeddings now represent the state of the art in text representation \cite{enevoldsen2025mmtebmassivemultilingualtext}.

However, LLM embeddings typically have a high number of dimensions. For example, T5-11B \cite{t5-11b} has 1024 dimensions, while the mainstream 7B decoder-only LLMs use 4096 dimensions. Such high-dimensional embeddings substantially increase memory storage and retrieval latency. To address this, \citet{kusupati2022matryoshka} introduced Matryoshka Representation Learning (MRL) that allows adaptive embedding length, balancing latency and accuracy for different applications. This approach is now supported in OpenAI and Google Gemini embedded APIs \cite{openai_mrl, lee2024gecko_mrl} and has been extended to various applications \cite{openai2024embedding,nussbaum2025nomic,yu2024arctic}. 
However, MRL requires full-parameter tuning, which results in a high training cost. Furthermore, experiments in \cite{wenbeyond} reveal a marked degradation in retrieval performance for MRL. 


Recently, \citet{wenbeyond} proposed Contrastive Sparse Representation (CSR) that sparsifies a pretrained LLM embedding into a high-dimensional but selectively activated feature space. Compared to MRL, CSR freezes the LLM backbone and trains only a lightweight adapter (an MLP layer), significantly reducing the training cost. However, unlike MRL, CSR requires retraining for different code lengths and suffers from substantial transformation overhead (see Section \ref{sec:learningBased}).

In a nutshell, MRL and CSR fall short of achieving the aim of fast retrieval time with low memory requirement, without substantially degrading the retrieval performance on a high-dimensional LLM embedded database.







\begin{figure*}[htbp]
  \centering

\begin{subfigure}{\textwidth}
    \centering 
    \includegraphics[width=0.6\linewidth]{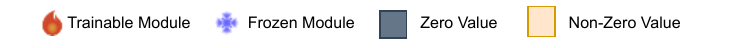}
  \end{subfigure}
  
  \begin{subfigure}{0.3\textwidth}
    \parbox[][4cm][c]{\linewidth}{ 
    \centering 
    \includegraphics[width=\linewidth]{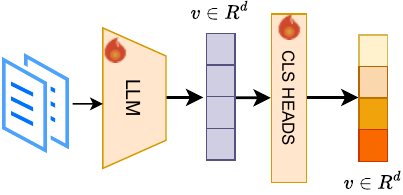}
    }
    \caption{MRL (Learning-based)}
    \label{fig:motivation_MRL}
  \end{subfigure}\hfill
  \begin{subfigure}{0.3\textwidth}
    \parbox[][4cm][c]{\linewidth}{ 
    \centering 
    \includegraphics[width=\linewidth]{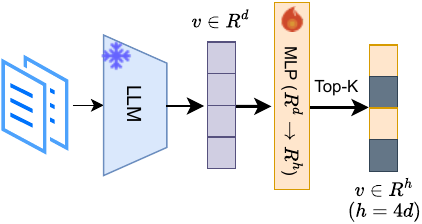}
    }
    \caption{CSR (Learning-based)}
    \label{fig:motivation_CSR}
  \end{subfigure}\hfill
  \begin{subfigure}{0.35\textwidth}
    \parbox[][4cm][c]{\linewidth}{ 
    \centering 
    \includegraphics[width=\linewidth]{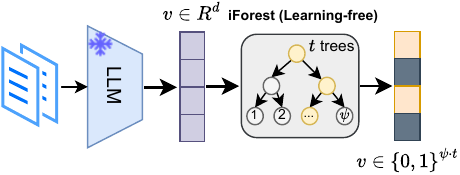}
    }
    \caption{IKE (No Learning)}
    \label{fig:motivation_IKE}
  \end{subfigure}
  \caption{MRL learns adaptive embedding length by optimizing different downstream applications. CSR combines reconstruction loss and contrastive loss for learning a sparse representation. In contrast, IKE maps points into binary embeddings for efficient retrieval through random partitions without sophisticated learning.}
  \label{fig:motivation_example}
  
\end{figure*}

While MRL and CSR provide a solution to the above retrieval problem, the methods are based on ad-hoc learning objectives, created without understanding the criteria of a feature space that are necessary in addressing the problem. A recent work has articulated three key criteria \cite{xu2025VDeH}: full space coverage, entropy maximization, and bit independence. The hashing method called VDeH converts a high-dimensional space in $\mathbb{R}^d$ to a binary embedded space, satisfying the three criteria without learning.

Though VDeH \cite{xu2025VDeH} has been shown to outperform many learning-to-hash (L2H) methods \cite{rcLSH,SH,CBE,ITQ,BP,SP,DSH,SELVE,SpH, li2018L2H} for non-LLM-related databases, our early examination has revealed that VDeH performs poorly in LLM-embedded databases (see Figure \ref{fig:para_m}). This paper reports our effort in creating a generalized version of VDeH that is amenable to LLM-embedded databases.

Our contributions are summarized as follows:
\begin{itemize}

    \item Proposing a novel feature transformation method named \modelname{} that efficiently and effectively converts an LLM embedding to a binary embedding without learning. IKE is a generalization of VDeH, which produces an embedding that satisfies four criteria. The fourth criterion, which we have identified, is that the partitions used to generate the embedding shall have \emph{high diversity}.   
    \item Conducting a theoretical analysis that not only uncovers the \emph{high diversity} criterion, but also the relative diversity of variants of two different implementations of IKE, i.e., Voronoi Diagrams and Isolation Forest.
    \item We conducted comprehensive experiments, where the main findings are: 1) \textit{In the exhaustive search}, IKE generally achieves 2.5-16.7x speedup in search time and 8-16x storage reduction, while maintaining performance of MRR@10 between 98\% and 101\% of the original LLM space; 2) \textit{In ANN (Approximate Nearest Neighbor Search)}, pairing IKE with an HNSW index achieves up to 5x higher throughput than operating in LLM space; 3) \textit{Further comparison} with learning-based method (CSR) and other (no-learning) compression methods (hashing and quantization) demonstrate the advantages of IKE.
\end{itemize}

\section{Preliminaries}\label{sec:preliminaries}
\paragraph{Isolation Kernel}
The Isolation Kernel~\cite{ting2018isolation} (IK) is a data-dependent kernel that computes the similarity between two points based on the partitions in the data space. Specifically, given a data set $D=\{\mathbf{x}_1,\dots,\mathbf{x}_n\} \subset \mathbb{R}^d$, we define $\mathcal{D} \subset D$ as a subset of $\psi$ data points, each drawn from $D$ with equal probability $\frac{1}{n}$. A \textit{partition} $H$ is derived so that each \textit{isolating partition} $\theta \in H$ isolates a point from the rest of  $\mathcal{D}$. 

Let $\mathcal{H}_\psi(D)$ denote the set of all partitions that are admissible under the data set $D$. IK of any two points $\mathbf{x},\mathbf{y}\in \mathbb{R}^d$ wrt $D$ is defined to be the expectation taken over the probability distribution $P_{\mathcal{H}_\psi(D)}$ on all partitions $H \in \mathcal{H}_\psi(D)$ that both $\mathbf{x}$ and $\mathbf{y}$ fall into the same isolating partition $\theta \in H$:
\begin{equation}
    \begin{aligned}
    K_{\psi,D}(\mathbf{x},\mathbf{y})=\mathbb{E}_{H \sim P_{\mathcal{H}_\psi(D)}}[\mathbb{I}(\mathbf{x},\mathbf{y}\in\theta \mid \theta\in H)],\nonumber
    \end{aligned}
\end{equation}
where $\mathbb{I}(\cdot)$ is an indicator function.

In practice, Isolation Kernel $K_\psi$ is estimated from a finite number of $t$ partitions $H_i$, each is derived from a data subset $\mathcal{D}_i\subset D$:
\begin{equation}
    K_{\psi,D}(\mathbf{x},\mathbf{y}) 
    \simeq \frac{1}{t} \, \sum_{i=1}^t \sum_{\theta_j \in H_i} \mathbb{I}(\mathbf{x} \in \theta_j)\mathbb{I}(\mathbf{y} \in \theta_j).\nonumber
\end{equation}
\paragraph{Voronoi Diagram Encoded Hashing} In the preliminary study, Voronoi Diagrams (VD) are explored as an implementation of the Isolation Kernel for transforming LLM embeddings into binary codes. This approach, known as VDeH \cite{xu2025VDeH}, has been shown to exhibit three desirable criteria: full space coverage, maximum entropy, and bit independence. However, VDeH suffers from notable performance degradation compared to the original LLM space (see Figure \ref{fig:para_m}). 

We hypothesize that the performance drop of VDeH is because the relationship between samples (points) in the LLM embedding space is more complex than what VDeH captures. In particular, using all dimensions for isolating partitions (i.e., hyperplanes in the VD) restricts the diversity of possible partitions. To verify this, we propose a generalization of VDeH, termed \modelnamevd{}, where for each partition we randomly sample m (m < d) dimensions for building VD. Note that when $m=d$, \modelnamevd{} becomes VDeH. The randomization of choosing isolation dimensions increases the diversity of isolating partitions, which becomes particularly beneficial when the number of partitions ($t$) is large. As a result, \modelnamevd{} achieves superior performance with small values of (m) compared to VDeH, as shown in Figure \ref{fig:para_m}. 

        

        


\begin{figure}[]
  \centering
    \includegraphics[width=0.9\columnwidth, height=3.8cm]{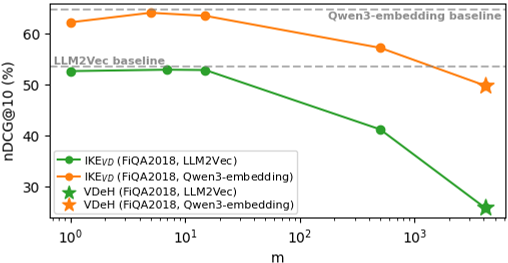}
    \caption{Effect of parameter \(m\) in {\modelnamevd} (\(t\)=4096) on nDCG@10 across two LLM embedding datasets. The pentagram marker indicates the performance of {\modelnamevd} when \(m=d\), which is equivalent to the VDeH method.}
    \label{fig:para_m}
\end{figure}
\section{Isolation Kernel-based Embeddings} \label{sec:IKE}
Building on the previous observation, we investigate \modelname{} using both Isolation Forest (iForest) and VD. The construction of iForest or VD requires no learning, which means neither sophisticated optimization nor even simple greedy search. The iForest or VD captures the data distribution by creating large partitions in sparse regions and small partitions in dense regions. This is created in a completely random process. 

\begin{figure}[]
  \centering
  \begin{subfigure}{\columnwidth}
    \centering
    \includegraphics[width=\linewidth]{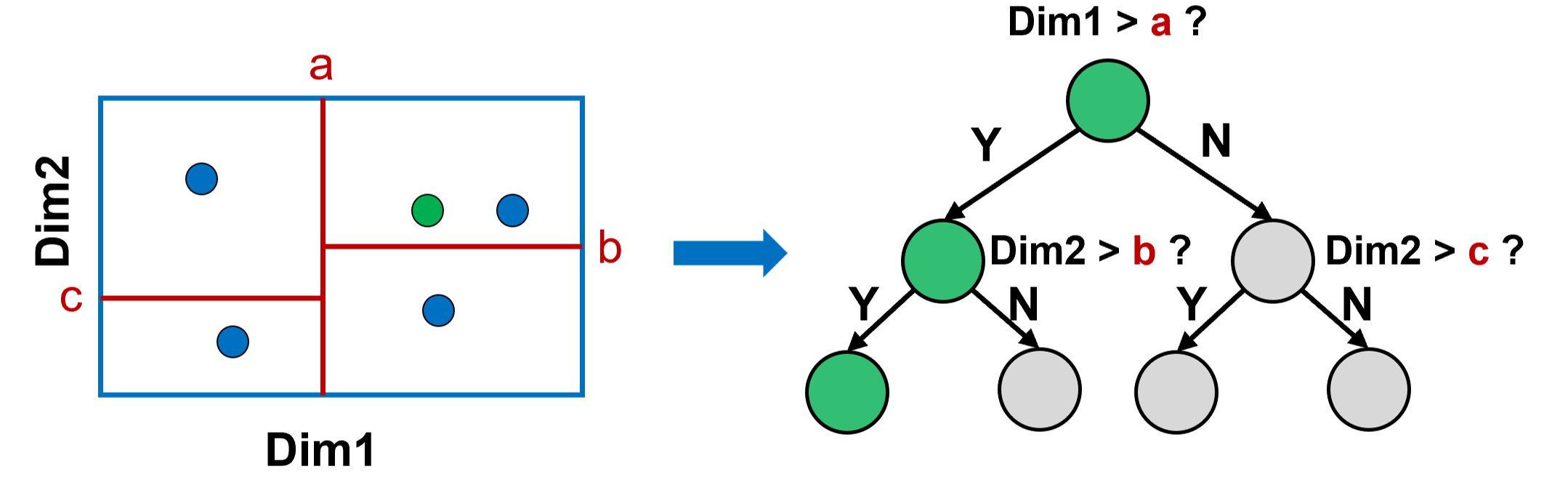}
    \caption{iForest: a, b, c are splitting values for dimensions Dim1 and Dim2 to create the four partitions.}
    \label{fig:iforest_partition}
  \end{subfigure}
  \begin{subfigure}{\columnwidth}
    \centering
    \includegraphics[width=\linewidth]{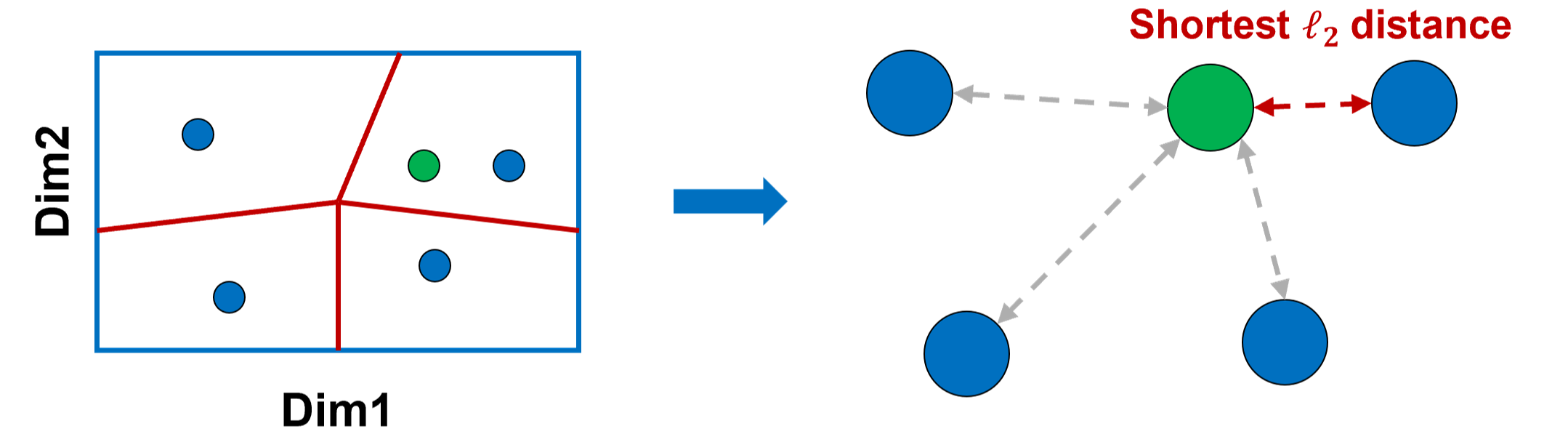}
    \caption{Voronoi Diagram (VD): red lines indicate splitting hyperplanes, which split the space into four VD cells.}
    \label{fig:VD_partition}
  \end{subfigure}
  
  \caption{A illustration of partitioning in 2D space ($\psi=4$) by different IK implementations: iForest and VD. Blue points indicate the samples used to construct the partitions, and the green point represents an input point to be transformed.}
  \label{fig:iforest_and_VD_partition}
\end{figure}

\subsection{IKE based Vector Database Construction}

\paragraph{Embedding} An LLM-based text embedding model is utilized to convert corpus documents and queries into dense vector embeddings. This embedding model is pretrained to effectively capture textual semantics.

\paragraph{IKE with iForest} A random subset of the corpus embeddings is selected to construct an Isolation Forest (iForest) model, where each partition \(H\) is referred to as an \emph{Isolation Tree} (iTree). Consequently, an Isolation Forest represents an ensemble of \(t\) independently constructed partitions.


An iTree \(H\) is a proper binary tree in which every internal node has two children. Each child node corresponds to an isolating partition \(\theta \in H\). Here, a child node is obtained by \textit{randomly selecting a feature \(q\) and the split value \(p\)} uniformly from the interval between the minimum and maximum values of feature \(q\) within \(\mathcal{D}\).
The construction details of the iTree are provided in Appendix \ref{app:method_details:iTree} and the partitioning is illustrated in Figure \ref{fig:iforest_partition}.   

Under the assumption that all points in \(\mathcal{D}\) are distinct, each point will eventually be isolated in its own leaf node. Therefore, an iTree consists of exactly \(\psi\) leaf nodes and \(\psi - 1\) internal nodes, resulting in a total of \(2\psi - 1\) nodes. Hence, the memory complexity of a single iTree is \(O(\psi)\). As previously stated, $t$ iTrees constitute the Isolation Forest; thus, the overall space complexity of the model is \(O(t\psi)\). 

\paragraph{IKE with Voronoi Diagram (\modelnamevd)} Although iForest is used in \modelname{} by default, we can also exploit Voronoi Diagram for IKE as described in the previous section. Note that we refer \modelname{} to indicate the iForest implementation, and \modelnamevd{} for the variant with VD implementation. A formal description of \modelnamevd{} is provided in Appendix \ref{app:method_details:ike_vd} and the partitioning of VD is illustrated in Figure \ref{fig:VD_partition}. An \modelnamevd{} model is composed of $t$ independently constructed partitions, thus exhibiting an overall space complexity of \(O(mt\psi)\).

\paragraph{Vector Database}
Depending on the retrieval strategy, the data points can either be stored directly, which corresponds to an exhaustive search during retrieval, or organized using an indexing structure to enable Approximate Nearest Neighbor (ANN) search. Each data point in IKE space is stored using an index-based representation \(\Phi^{\mathrm{idx}}(\mathbf{x})\), which comprises \(t\) elements. Here, each element \(\Phi^{\mathrm{idx}}_i(\mathbf{x}) \in \{1,\cdots,\psi\}\) indicates the index of the isolating partition to which data point \(\mathbf{x}\) is assigned within the \(i\)-th partition. As each element of \(\Phi^{\mathrm{idx}}(\mathbf{x})\) can be stored using \(\lceil \log_2(\psi) \rceil\) bits, we only need \(t\lceil \log_2(\psi) \rceil\) bits per point. In contrast, we need a 32-bit floating-point number for each dimension in the original LLM embeddings. Therefore, when $t = d$, the transformed IKE yields a \(\frac{32}{\lceil \log_2(\psi) \rceil}\) times reduction in storage usage. 

\paragraph{Adaptability of Embedding Length}  It is desirable to have embeddings that can be truncated for different accuracy–speed tradeoffs as in MRL \cite{kusupati2022matryoshka}. {\modelname} achieves this through its compositional design: since the $t$ iTrees are built independently, any subset can be retained to reduce the transformed space dimension. Unlike MRL, we achieve adaptability without LLM training.



\subsection{Retrieval within the {\modelname} space}

\paragraph{Retrieval.} 
Given a query \(q\) obtained from an LLM-based embedding model, we first apply the previously constructed iForest or VD model to map the query into the {\modelname} space, yielding the transformed vector \(\Phi^{\mathrm{idx}}(\mathbf{q})\). Next, we measure the similarity between \(\Phi^{\mathrm{idx}}(\mathbf{q})\) and each transformed representation \(\Phi^{\mathrm{idx}}(\mathbf{x}) (\forall x\in D)\) for retrieval.
{\begin{equation} \label{eq:Kpsi_idx}
\begin{aligned}
    K_\psi\bigl(\mathbf{x},\mathbf{y} \mid D\bigr)
    \simeq
    \frac{1}{t}\,\sum_{i=1}^t\mathbb{I}(\Phi^{\mathrm{idx}}_i(\mathbf{x})=\Phi^{\mathrm{idx}}_i(\mathbf{y})). \nonumber
\end{aligned}
\end{equation}}


\paragraph{Mapping Time} The mapping time of \modelname{} (with iForest by default) is \(O(t \lceil \log_2(\psi)\rceil)\) as the height limit of each iTree is set to \( \lceil \log_2(\psi) \rceil\). On the other hand, the mapping process of {\modelnamevd} involves \(\ell_2\) distance calculations over \(m\) dimensional vectors, resulting in a time complexity of \(O(mt\psi)\). As a result, \modelname{} is more suitable for online retrieval compared to \modelnamevd{}.

\paragraph{Efficient Similarity Computation}
The similarity measure requires packing \(n_b = \lceil \log_2(\psi) \rceil\) bits into a single unit for comparison. When \(\psi=2\), we have \( n_b = 1 \), in which case the binary comparison can be applied. When \(n_b \geq 2\), we employ the following approach for similarity computation:

Given two IKEs \(x\) and \(y\) to be compared, their partial components are first concatenated into longer bit strings (e.g., 64 bits), denoted as \(D_x\) and \(D_y\). Let $D=D_x \oplus D_y$, where \(\oplus\) denotes the bitwise XOR operation, we compute: 
\[
M = D \; | \;(D \gg 1) \; | \; \cdots \; | \;(D \gg (n_b-1)) 
\]
Here, \(|\) denotes the bitwise OR operation, and \(\gg\) denotes the logical right shift operation. The resulting bit string \(M\) has the following property: within every segment of \(n_b\) bits, if the rightmost bit is 0, it indicates that \(x\) and \(y\) are equal in the corresponding partition. 

To isolate these zero bits and set all other bits to 1 for efficient counting, a mask is applied. The mask is constructed based on the value of \(n_b\) (for instance, when \(n_b=2\), the mask is 10101010...):
\begin{align*}
M &= M \mid mask \\
n_1 &= \operatorname{popcnt}(M)
\end{align*}
After the OR operation with the mask, the number of remaining 0 bits in \(M\) corresponds to the number of elements where \(D_x\) and \(D_y\) match. The function \(\text{popcnt}(M)\) counts the number of 1-bits in \(M\), denoted as \(n_1\). The number of matching elements can then be derived by a simple subtraction. In practice, we only support \(n_b 
\in \{1, 2, 4, 8\}\) bits to keep the codes byte‑aligned. An example is provided in Appendix \ref{app:method_details:example}.

\subsection{Criteria of Isolation Kernel Embedding}
From a hashing perspective, each Voronoi cell or iTree leaf node can be viewed as a hash function that maps a data point to a binary value, where the value of ``1'' indicates that the hash function \textit{covers} the point. Under some assumptions, {\modelname} satisfies the three criteria demonstrated for VDeH~\cite{xu2025VDeH}: 1) \textbf{Full space coverage}: All hash functions cover entire space $\mathbf{x} \in R^d$, or there exists a hash function $g(\mathbf{x})$ that maps $\mathbf{x}$ to the value of $1$; 2) \textbf{Entropy maximization}: For a given dataset, every hash function shall cover approximately the same number of points; 3) \textbf{Bit independence}: All bits in the hash code are mutually independent. The proof is given in Appendix~\ref{app:ike-theory}.

\paragraph{Diversity Condition} By framing both VDeH and \modelname{} as ensemble methods, we introduce \textbf{diverse partitioning} as a fourth, crucial criterion. We prove that \modelname{} generates more robust hash representations than VDeH, owing to its ability to produce a more diverse set of base partitioners. The detailed proof is provided in Appendix~\ref{app:diverse_partitioning}.

\section{Experiments} \label{sec:Experiments}


        
        
        

\begin{table*}[]
    \centering
    \caption{Results with two LLMs in the exhaustive search setting. The ``Space'' column reports the reduction in memory usage achieved by mapping the data into the {\modelname} space, expressed as a multiplicative factor compared to the LLM embedding. In the ``Other'' column, we record auxiliary processing time: for {\modelname}, this includes {\modelname} model construction and feature mapping for all corpus embeddings; for LLM, this includes \(\ell_2\)-normalization of all corpus embeddings and Faiss's exhaustive search index construction. The ``Search'' column measures the search execution time for all queries, with the speedup ratio of our method indicated in parentheses. 
    }
    \label{tab:comp_with_llm}
        \begin{minipage}{0.48\textwidth}
            \centering
            \caption*{LLM2Vec}
            \resizebox{\textwidth}{!}{%
            \begin{tabular}{ccccccc}
                \toprule
                \multirow{2}{*}{\textbf{Dataset}} & \multirow{2}{*}{\textbf{\shortstack{Space\\Type}}} & \multirow{2}{*}{\textbf{\shortstack{Space}}}  & \multicolumn{2}{c}{\textbf{Time (s)}} & \multirow{2}{*}{\textbf{MRR@10}} & \multirow{2}{*}{\textbf{nDCG@10}} \\
                \cmidrule(lr){4-5}
                & & & \textbf{Other} & \textbf{Search} & &  \\ 
                \midrule
                
                \multirow{2}{*}{HotpotQA} & LLM & & 46.33 & 326.61 & \textbf{83.83} & \textbf{74.09}  \\ 
                & {\modelname} ($\psi$=12) & $\downarrow$8\(\times\) & \textbf{18.17} & \textbf{54.82 ($\downarrow$6.0\(\times\))} & 83.69 \(\pm\) 0.13 & 73.24 \(\pm\) 0.09 \\
                
                \cmidrule(lr){1-7}
                \multirow{2}{*}{FiQA2018} & LLM & & 0.29 & 0.47 & \textbf{60.33} & \textbf{53.49} \\
                & {\modelname} ($\psi$=6) & $\downarrow$8\(\times\) & \textbf{0.20} & \textbf{0.063 ($\downarrow$7.5\(\times\))} & 59.97 \(\pm\) 0.32 & 52.59 \(\pm\) 0.31 \\
                
                \cmidrule(lr){1-7}
                \multirow{2}{*}{FEVER-HN} & LLM & & 0.84 & 1.10 &\textbf{91.52} & \textbf{90.45} \\
                & {\modelname} ($\psi$=15) & $\downarrow$8\(\times\) & \textbf{0.59} & \textbf{0.26 ($\downarrow$4.2\(\times\))} & 91.36 \(\pm\) 0.23 & 90.27 \(\pm\) 0.16 \\

                \cmidrule(lr){1-7}
                \multirow{2}{*}{Istella22} & LLM & & 5.05 & 1.56  & 68.11 & \textbf{63.88}   \\ 
                & {\modelname} ($\psi$=16) & $\downarrow$8\(\times\) & \textbf{3.26} & \textbf{0.62 ($\downarrow$2.5\(\times\))} & \textbf{68.92} \(\pm\) 0.88 & 63.86 \(\pm\) 0.50 \\
                
                \cmidrule(lr){1-7}
                \multirow{2}{*}{TREC DL 23} & LLM & & 5.11 & 0.90 & 97.56 & \textbf{73.10}  \\ 
                & {\modelname} ($\psi$=3) & $\downarrow$16\(\times\) & \textbf{2.23} & \textbf{0.16 ($\downarrow$5.6\(\times\))} & \textbf{98.11} \(\pm\) 1.02 & 71.78 \(\pm\) 0.62 \\
                
                \cmidrule(lr){1-7}
                \multirow{2}{*}{Touche2020.V3} & LLM & & 1.57 & 1.44 & \textbf{82.65} & \textbf{51.61} \\
                & {\modelname} ($\psi$=7) & $\downarrow$8\(\times\) & \textbf{1.06} & \textbf{0.095 ($\downarrow$15.2\(\times\))}  & 82.37 \(\pm\) 3.33 & 51.02 \(\pm\) 1.18 \\
                
                \bottomrule
            \end{tabular}
            }
            \label{subtable1}
        \end{minipage}
        \hfill
        \begin{minipage}{0.48\textwidth}
            \centering
            \caption*{Qwen3-embedding}
            \resizebox{\textwidth}{!}{%
            \begin{tabular}{ccccccc}
                \toprule
                \multirow{2}{*}{\textbf{Dataset}} & \multirow{2}{*}{\textbf{\shortstack{Space\\Type}}} & \multirow{2}{*}{\textbf{\shortstack{Space}}}  & \multicolumn{2}{c}{\textbf{Time (s)}} & \multirow{2}{*}{\textbf{MRR@10}} & \multirow{2}{*}{\textbf{nDCG@10}} \\
                \cmidrule(lr){4-5}
                & & & \textbf{Other} & \textbf{Search} & &  \\ 
                \midrule
                
                \multirow{2}{*}{HotpotQA} & LLM & & 41.70 & 332.66 & \textbf{89.76} & \textbf{76.85} \\ 
                & {\modelname} ($\psi$=10) & $\downarrow$8\(\times\) &\textbf{14.51} & \textbf{57.21 ($\downarrow$5.8\(\times\))} & 88.92 \(\pm\) 0.13 & 75.41 \(\pm\) 0.10 \\
                
                \cmidrule(lr){1-7}
                \multirow{2}{*}{FiQA2018} & LLM & & 0.28 & 0.46 & \textbf{72.05} & \textbf{64.66} \\
                & {\modelname} ($\psi$=15) & $\downarrow$8\(\times\) & \textbf{0.24} & \textbf{0.069 ($\downarrow$6.7\(\times\))} & 70.95 \(\pm\) 0.37 & 63.17 \(\pm\) 0.31 \\
                
                \cmidrule(lr){1-7}
                \multirow{2}{*}{FEVER-HN} & LLM & & 0.88 & 1.22 & \textbf{93.91} & \textbf{92.50} \\
                & {\modelname} ($\psi$=13) & $\downarrow$8\(\times\) & \textbf{0.39} & \textbf{0.29 ($\downarrow$4.2\(\times\))} & 93.59 \(\pm\) 0.22 & 92.09 \(\pm\) 0.19 \\

                \cmidrule(lr){1-7}
                \multirow{2}{*}{Istella22} & LLM & & 5.23 & 1.62 & \textbf{65.25} & 59.44 \\ 
                & {\modelname} ($\psi$=11) & $\downarrow$8\(\times\) & \textbf{1.64} & \textbf{0.61 ($\downarrow$2.7\(\times\))} & 65.11 \(\pm\) 0.70 & \textbf{59.63} \(\pm\) 0.55 \\
                
                \cmidrule(lr){1-7}
                \multirow{2}{*}{TREC DL 23} & LLM & & 5.08 & 0.84 & 95.73 & \textbf{73.79}  \\ 
                & {\modelname} ($\psi$=6) & $\downarrow$8\(\times\) & \textbf{2.91} & \textbf{0.32 ($\downarrow$2.6\(\times\))} & \textbf{96.93} \(\pm\) 0.42 & 72.78 \(\pm\) 0.28\\
                
                \cmidrule(lr){1-7}
                \multirow{2}{*}{Touche2020.V3} & LLM & & 1.55 & 1.59 & \textbf{96.94} & \textbf{76.41} \\
                & {\modelname} ($\psi$=16) & $\downarrow$8\(\times\) & \textbf{0.47} & \textbf{0.095 ($\downarrow$16.7\(\times\))} & 95.83 \(\pm\) 1.41 & 73.35 \(\pm\) 1.13 \\
                
                \bottomrule
            \end{tabular}
            }
            \label{subtable2}
        \end{minipage}
\end{table*}

We conduct comprehensive experiments to evaluate the performance of retrieval in the {\modelname} space. For evaluation, we compute MRR@10~\cite{MRR} and nDCG@10~\cite{nDCG} as effectiveness metrics, and report search time and memory consumption as efficiency metrics. Here, MRR@10 computes the mean reciprocal rank of the first relevant document across all queries, whereas nDCG@10 measures the ranking quality by accounting for the position of relevant documents. All experiments were conducted on a Linux server equipped with two Intel Xeon Gold 6330 CPUs (2.00 GHz), providing a total of 56 physical cores (112 logical threads), and 503 GiB of RAM. Both index construction and search operations were performed in-memory using all available threads. Reported search times represent the average of 10 consecutive runs. We conducted the experiment in Section \ref{sec:comparsion_llm} using 10 different random seeds to demonstrate the stability of \modelname{}. For other experiments, we used a single fixed seed.


\subsection{Setup}
\paragraph{Datasets for Evaluation}
We evaluate our method on four text retrieval datasets from the Massive Text Embedding Benchmark (MTEB)~\cite{muennighoff2022mteb} (HotpotQA~\cite{hotpotqa}, FiQA2018~\cite{thakur2021beir}, FEVER-HN~\cite{fever}, Touche2020.V3~\cite{touchev3}), and two multi-level retrieval datasets: Istella22~\cite{istella22} and TREC DL 23~\cite{trecdl23} (with a 1 million subset extracted from their original corpora). More dataset details are provided in Appendix~\ref{app:exp_detail:dataset}. 

\paragraph{Hyperparameter Settings} 
Following the original Isolation Forest framework~\cite{Isolation_Forest}, we set the maximum iTree height to \(l = \lceil \log_2(\psi)\rceil\). Unless otherwise specified, the number of iTrees \(t\) equals the embedding dimension \(d\). The parameter \(\psi\) is tuned per dataset via a single grid search under exhaustive search. The detailed tuning procedure is provided in Appendix~\ref{app:exp_detail:hyperpara}.

\paragraph{LLMs}
We employ two high‐dimensional dense retrieval models: LLM2Vec (Mistral‑7B backbone)~\cite{llm2vec} and Qwen3‑embedding‑8B~\cite{qwen3embedding}, both producing 4096‑dimensional embeddings.

\subsection{Comparison with LLM embeddings}\label{sec:comparsion_llm}

\paragraph{Exhaustive Search}
Table \ref{tab:comp_with_llm} presents the retrieval performance in the {\modelname} space compared with the LLM embedding space under an exhaustive search configuration. The experiment in the LLM embedding space was implemented using the Faiss library \cite{douze2024faiss}. The experimental results demonstrate that retrieval in the {\modelname} space achieves a 2.5–16.7\(\times\) speedup in search time and reduces memory usage by 8–16\(\times\) compared to the LLM embedding space, while maintaining comparable or superior retrieval performance. Specifically, MRR@10 reaches between 98\% and 101\% of those obtained in the LLM embedding space, while nDCG@10 varies from 96\% to 100\%.

\paragraph{ANN Search}
Unlike exhaustive search, ANN methods build an index over the data \cite{grpah-based-ann-survey, ann-multi-dim-database, wei2025subspace, starling, lin-2025-operational} to restrict the search to a small subset of candidates, thereby reducing query time at the cost of a slight accuracy loss. We integrate {\modelname} with two widely used indexing structures: the inverted file index (IVF) and the graph-based HNSW index ~\cite{hnsw}. Their retrieval performance is then compared against counterparts in the LLM embedding space with Faiss implementation (See Appendix \ref{app:exp:llm_emb:ann_details}).




\sloppy
Figure \ref{fig:ann_flat} illustrates the Query per Second (QPS) versus MRR@10 trade-off curves of four ANN methods on the HotpotQA dataset, operated in either the LLM embedding or the IKE spaces. Overall, {\modelname} consistently outperforms that of LLM across different indexing methods. When using the HNSW index, retrieval with {\modelname} achieves approximately 4–5 times higher throughput than with LLM embeddings. Specifically, on the HotpotQA (LLM2Vec) dataset, at a comparable MRR@10 of around 61\%, the QPS increases from \(1.6 \times 10^4\) (HNSW+LLM) to \(6.4 \times 10^4\) (HNSW+{\modelname}). When using the IVF index, performing retrieval in the {\modelname} space yields a QPS improvement of approximately 6–8 times over that in the LLM-embedded space. Additional results on more datasets are provided in Appendix \ref{app:exp:llm_emb:more_exp_res}.

\begin{figure}[]
  \centering
  \begin{minipage}{\linewidth}
    \centering
    
    \subfloat[][HotpotQA (LLM2Vec)]{\label{fig:flat_ann_hotpotqa_l2v}\includegraphics[width=.50\columnwidth]{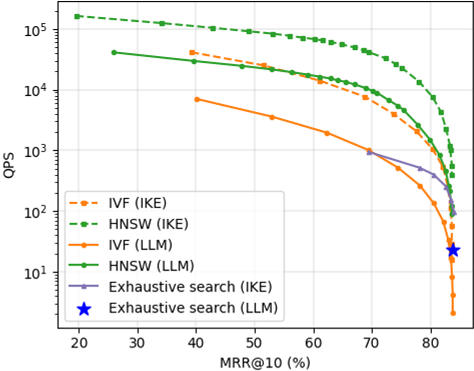}}\hfill
    \subfloat[][HotpotQA (Qwen3)]{\label{fig:flat_ann_hotpotqa_qwen3}\includegraphics[width=.48\columnwidth]{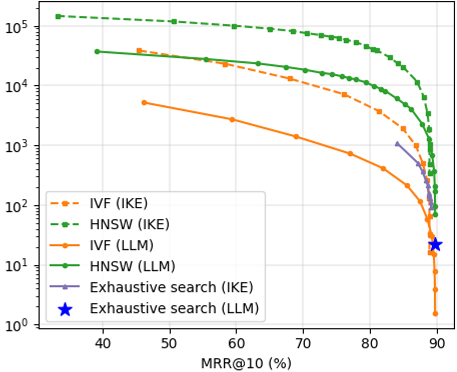}}

  \end{minipage}
  
  \caption{QPS vs. MRR@10 for Four ANN Methods on the HotpotQA dataset. The curves are obtained by adjusting the ANN search parameters. The detailed configuration is described in Appendix \ref{app:exp:llm_emb:config}.} 
  \label{fig:ann_flat}
\end{figure}




\subsection{Comparison with learning-based method}\label{sec:learningBased}

In this section, we compare {\modelname} with compression methods that require learning, such as MRL \cite{kusupati2022matryoshka} and CSR \cite{wenbeyond}. However, as MRL requires full model retraining (see Figure \ref{fig:motivation_example}), we do not have enough computing resources for such an experiment. In addition, MRL is shown to be inferior to CSR, and thus, we select CSR as a representative baseline for comparison.

  

\begin{figure}[]
  \centering
    \begin{minipage}{\linewidth}
    \centering
    
    \subfloat[][Accuracy]{\label{fig:CSR_IKE_acc}\includegraphics[width=.49\columnwidth]{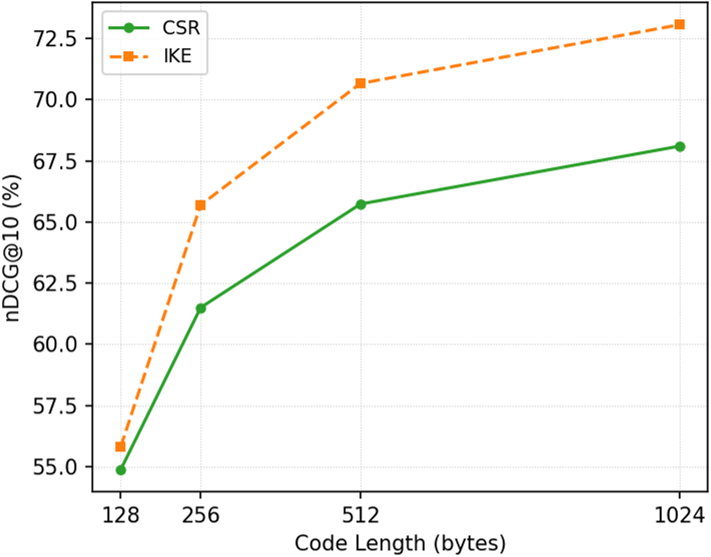}}\hfill
    \subfloat[][Search Time]{\label{fig:CSR_IKE_time}\includegraphics[width=.49\columnwidth]{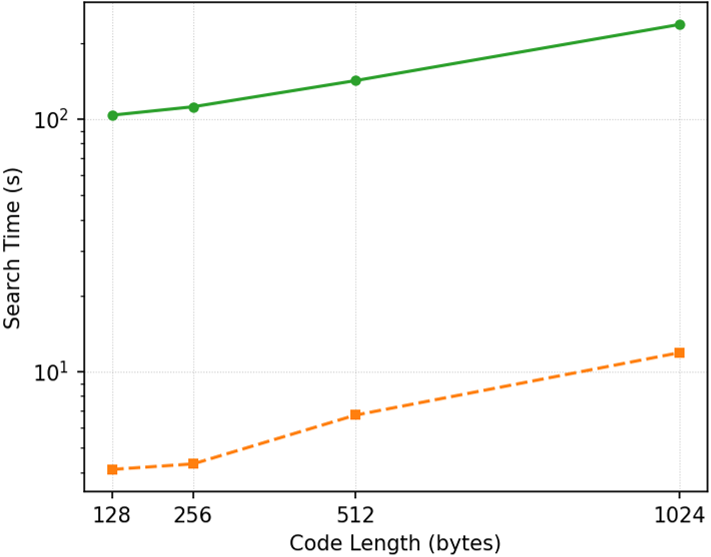}}

  \end{minipage}
  
  \caption{Comparison of Retrieval Accuracy and Search Time between CSR and {\modelname} in the exhaustive search setting on the HotpotQA (LLM2Vec) dataset.}
  \label{fig:CSR_IKE}
\end{figure}

\begin{figure*}[htbp]
  \centering
  
    \includegraphics[width=0.97\linewidth]{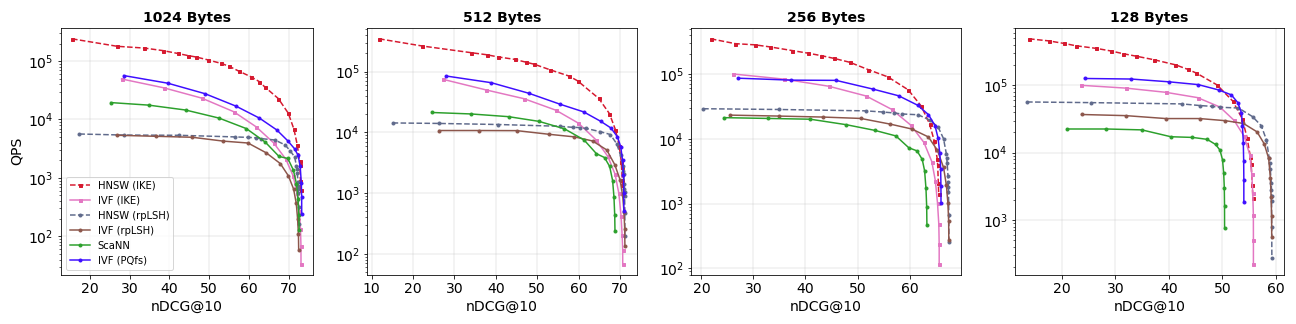}
    \label{fig:compression_hotpotqa_l2v}

  \caption{\textbf{Performance Comparison of Compression Methods on ANN Retrieval: QPS vs. nDCG@10 on the HotpotQA (LLM2Vec) dataset}. The curves are obtained by adjusting the search parameter of the index structures.}
  \label{fig:compression_ann_res}
\end{figure*}

\paragraph{CSR Parameter Settings} CSR uses a hyperparameter \(k\) to control the sparsity level of the vector \(z\), where each vector contains exactly \(k\) non-zero entries. We compare CSR with {\modelname} under fixed code length conditions, where the code length of a sparse vector \(z\) is defined as \(4k\) bytes. The CSR model is trained using the hyperparameters provided in the original paper for text retrieval tasks.  The search time includes both the mapping time for all queries and the similarity computation time, rather than solely the similarity calculation time as reported in the original CSR paper.

\paragraph{Comparative Results} Figure \ref{fig:CSR_IKE} presents a comparative evaluation in the exhaustive search setting on the HotpotQA (LLM2Vec) dataset between CSR and {\modelname} in terms of search time on a CPU and retrieval accuracy, measured by nDCG@10. Additional results on more datasets are provided in Appendix \ref{app:exp:learn_based}.  The results show that {\modelname} significantly outperforms CSR in terms of search efficiency across all datasets, while achieving comparable or even superior retrieval accuracy. Specifically, {\modelname} reduces the search time of CSR by one order of magnitude. On the HotpotQA (LLM2Vec) dataset, {\modelname} achieves a search time of 4–12 seconds, whereas CSR requires 104–237 seconds. This demonstrates that {\modelname} is a superior option to CSR in practical retrieval scenarios.



\paragraph{Discussion} We analyze the reasons behind the significantly higher retrieval latency of CSR compared to the {\modelname} method. The sparse vector mapping of CSR employs an MLP to transform the original \(d\)-dimensional embedding into a \(4d\)-dimensional space, after which only the top-\(k\) 
dimensions are retained while the remaining ones are set to zero. Specifically, the mapping process of CSR requires \(O(d^2 + d\log_2(k))\) operations, whereas the mapping in {\modelname} demands only \(O(t\lceil \log_2(\psi)\rceil)\), where \(\lceil \log_2(\psi)\rceil\) denotes the maximum iTree height.
Specifically, considering a scenario where $d=4096$ dimensional embedding is compressed into 128 bytes, {\modelname} utilizes hyperparameters $t=1024$ and $\psi=2$ while CSR requires $k=32$. The mapping complexity of CSR is dominated by the quadratic term $O(d^2)$, whereas that of {\modelname} scales linearly with $t$. Given that $d^2 \gg t$ in high-dimensional spaces, our method achieves a significant reduction in mapping overhead.
Additionally, for similarity computation, our method leverages highly optimized bitwise operations, which are faster than the sparse matrix multiplication relied upon by CSR.


\subsection{Comparison with compression methods}\label{sec:learningFree_with_index}
In this section, we compare the {\modelname} method with the learning-free vector compression techniques, under \textbf{same code-length} scenario in the ANN search setting. Specifically, we study different code length constraints of \{1024, 512, 256, 128\} bytes. 
For a fixed code length, we plot QPS against the retrieval effectiveness measure (nDCG@10) for varying ANN search parameters.
We compare our methods, IVF / HNSW (IKE), with several baselines: IVF / HNSW (rpLSH)~\cite{rpLSH}, IVF (PQfs)~\cite{PQfs}, and ScaNN~\cite{scann}. Detailed descriptions of these methods are provided in Appendix \ref{app:exp:learn_free:methods}. Note that the embedding length in {\modelname} is determined by parameters  $t$ and  $\psi$. Under a fixed code length, we find that smaller  $\psi$ and larger $t$ yield better retrieval performance. Thus, we set $\psi$=2 in this experiment.

\sloppy
Figure \ref{fig:compression_ann_res} illustrates QPS vs NDCG@10 curves for all methods on the HotpotQA (LLM2Vec) dataset, with the code lengths used indicated above each plot. Points closer to the upper right corner represent better overall performance. Our HNSW ({\modelname}) method achieves the best balance between efficiency and effectiveness across all datasets. In terms of search efficiency, it achieves from 2.9\(\times\) to 10\(\times\) higher throughput than other learning-free compression methods in an ANN setting, while maintaining comparable retrieval accuracy. For instance, on the HotpotQA (LLM2Vec) dataset with a fixed code length of 512 bytes, when achieving approximately nDCG@10 of 60\%, the IVF (PQfs) method attains around \(2.3\times10^4\) QPS, 
whereas the HNSW ({\modelname}) method achieves around \(6.7\times10^4\) QPS, representing a 2.9 times improvement in efficiency. Similarly, when the code length is extended to 1024 bytes to reach 70\% nDCG@10, HNSW ({\modelname}) attains $1.2 \times 10^4$ QPS, representing a 10\(\times\) efficiency improvement over IVF (rpLSH), which yields only $1.1 \times 10^3$ QPS. Additional results are provided in Appendix \ref{app:exp:learn_free:more_exp_res}. 


\begin{figure*}[htbp]
  \centering
  \begin{subfigure}{0.245\textwidth}
    \centering
    \includegraphics[width=\linewidth]{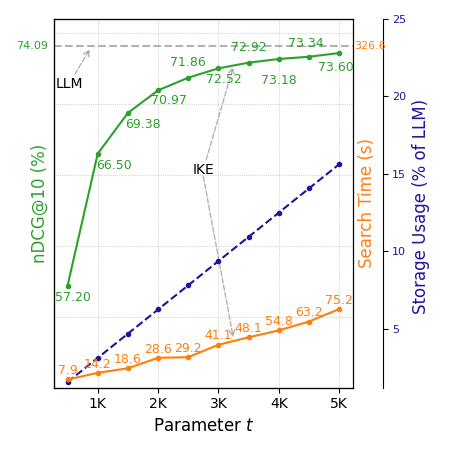}
    \caption{\centering Exhaustive Search\\ (HotpotQA, LLM2Vec)}
    \label{fig:para_t_hotpotqa_l2v_exhau}
  \end{subfigure}
  \begin{subfigure}{0.245\textwidth}
    \centering
    \includegraphics[width=\linewidth]{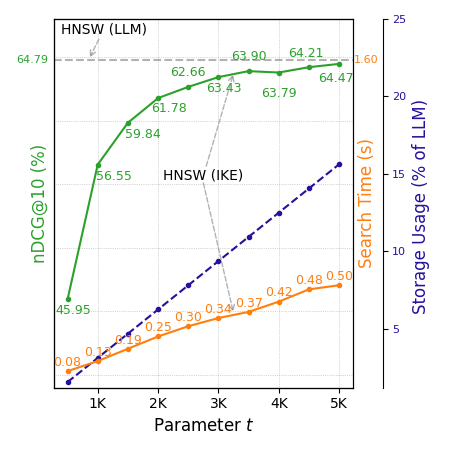}
    \caption{\centering HNSW\\ (HotpotQA, LLM2Vec)}
    \label{fig:para_t_hotpotqa_l2v_hnsw}
  \end{subfigure}
  \begin{subfigure}{0.245\textwidth}
    \centering
    \includegraphics[width=\linewidth]{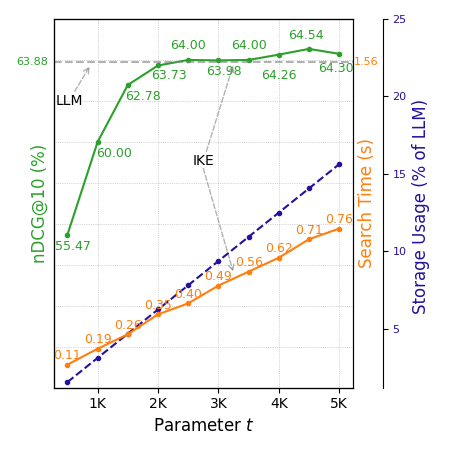}
    \caption{\centering Exhaustive Search\\ (Istella22, LLM2Vec)}
    \label{fig:para_t_istella22_exhau}
  \end{subfigure}
  \begin{subfigure}{0.245\textwidth}
    \centering
    \includegraphics[width=\linewidth]{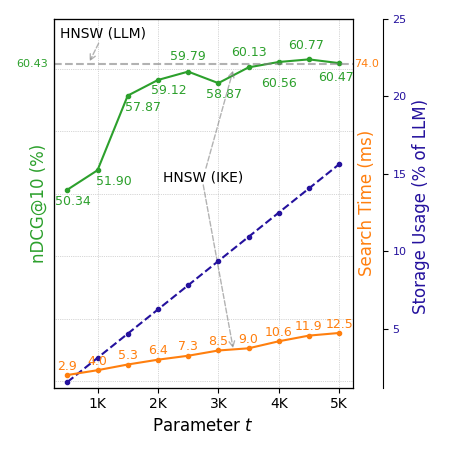}
    \caption{\centering HNSW\\ (Istella22, LLM2Vec)}
    \label{fig:para_t_istella22_hnsw}
  \end{subfigure}
  \caption{\textbf{Effect of parameter \(t\) (the number of iTrees) on Search Time and Retrieval Accuracy}. The gray dashed lines denote the corresponding retrieval performance in the LLM embedding space. 
}
  \label{fig:para_t}
\end{figure*}


\subsection{Extra analyses} \label{sec:extra_ana}

\subsubsection{Impact of hyperparameter \(t\)}
The hyperparameter \(t\) denotes the number of iTrees in the iForest. Increasing \(t\) produces more fine‐grained representations, which can enhance retrieval accuracy; conversely, reducing \(t\) trades off accuracy for lower query latency and reduced memory usage. To assess this trade‐off, we evaluate the impact of varying \(t\) on the HotpotQA (LLM2Vec) and Istella22 (LLM2Vec) datasets under both exhaustive search and HNSW configurations. For the HNSW setup, we use a search parameter $efSearch=50$. We draw \(t\) from 0.5K to 5K (where 1K = 1024), and at each value, we record the search time, retrieval accuracy (nDCG@10), and storage usage relative to LLM embeddings.

The results in Figure \ref{fig:para_t} show that search time increases approximately linearly with the number of iTrees \(t\), whereas retrieval accuracy follows a logarithmic growth pattern. Specifically, while each increment in \(t\) below 3K yields noticeable gains in retrieval accuracy, the marginal improvement diminishes beyond this threshold. For instance, as illustrated in Figure \ref{fig:para_t_hotpotqa_l2v_hnsw}, reducing \(t\) from 4K to 3K decreases query latency from 0.42s to 0.34s, corresponding to a 19.05\% improvement in search efficiency and a 25\% reduction in memory footprint, at the expense of only a 0.36\% drop in nDCG@10 (from 63.79\% to 63.43\%). Thus, tuning \(t\) enables substantial gains in search efficiency and memory savings with a minimal impact on retrieval accuracy. \textit{In other words, \modelname{} exhibits a desirable property similar to that of MRL, enabling flexible adaptation of code length to balance effectiveness and efficiency, while requiring no costly training.}

\subsubsection{Impact of hyperparameter \(\psi\)}
The hyperparameter $\psi$ denotes the number of isolating partitions within an iTree. To investigate its influence, we conduct a sensitivity analysis on the validation set of the Istella22 (LLM2Vec) dataset using exhaustive search. Specifically, we fix the number of iTrees at $t = 4096$ and vary $\psi$ across the set $\{2, 4, 8, 16\}$. We report the resulting computational overhead and retrieval precision, as summarized in Table \ref{tab:para_psi}. 

The experimental results indicate that, unlike the monotonic accuracy gains observed with parameter $t$ in Figure \ref{fig:para_t}, the impact of $\psi$ on retrieval precision is relatively marginal and lacks a consistent monotonic trend. Notably, however, even with small values of $\psi$, \modelname{} maintains remarkably high retrieval accuracy while reducing both mapping overhead and query latency. These findings provide critical engineering guidance: in latency-critical scenarios, selecting a small $\psi$ enables practitioners to maximize computational efficiency with negligible compromises in retrieval performance.

\begin{table}[!ht]
    \centering
    \caption{Sensitivity analysis for parameter \(\psi\)}
    \label{tab:para_psi}
        \resizebox{\columnwidth}{!}{
    \begin{tabular}{cccccc}
        \toprule
         \multirow{2}{*}{\(t\)} & \multirow{2}{*}{\(\mathbf{\psi}\)} & \multicolumn{2}{c}{\textbf{Time (s)}} & \multirow{2}{*}{\textbf{MRR@10}} & \multirow{2}{*}{\textbf{nDCG@10}} \\
        \cmidrule(lr){3-4}
         & & \textbf{Other} & \textbf{Search} & & \\
        \midrule
         4096 & 2 & 1.18 & 0.16 & 91.06 & 75.71 \\
         4096 & 4 & 1.69 & 0.25 & 90.99 & 75.83 \\
         4096 & 8 & 2.49 & 0.35 & 90.94 & 76.02 \\
         4096 & 16 & 3.70 & 0.36 & 91.43 & 76.35 \\
        \bottomrule
    \end{tabular}
    }
\end{table}


        

        

\begin{table}[!ht]
    \centering
    \caption{Results of different implementations of the {\modelname} in the exhaustive search setting.}
    \label{tab:vd_impl}
        \resizebox{\columnwidth}{!}{
    \begin{tabular}{ccc}
        \toprule
         \textbf{Method} & \textbf{MRR@10 (\%)} & \textbf{nDCG@10 (\%)} \\
        
        \midrule
        \multicolumn{3}{c}{FiQA2018 (LLM2Vec)}\\
        \cmidrule(lr){1-3}
         {\modelname} ($\psi$=6) & 60.13 & 52.58 \\
  {\modelnamevd} ($\psi$=3, $m$=7) & 60.23 & 52.89 \\
 {\modelnamevd} ($\psi$=3, $m$=1) & 59.76 & 52.61 \\
 \cdashline{1-3}
VDeH ($\psi$=3) & 32.07 & 25.84 \\

        \cmidrule(lr){1-3}
        \multicolumn{3}{c}{FiQA2018 (Qwen3)}\\
         \cmidrule(lr){1-3}
 {\modelname} ($\psi$=15) & 71.14 & 63.25  \\
  {\modelnamevd} ($\psi$=4, $m$=5)  & 71.99 & 64.03 \\
  {\modelnamevd} ($\psi$=4, $m$=1)  & 69.87 & 62.16 \\ \cdashline{1-3}
 VDeH ($\psi$=4)  & 58.91 & 49.78 \\
        
        \bottomrule
    \end{tabular}
    }
\end{table}

\sloppy
\subsubsection{Variations of IK implementations}

In this section, we evaluate the performance of VDeH, \modelnamevd{} (\(m\leq d\)) and \modelname{}. Here, we report retrieval performance under three configurations of \(m\): \(m=1\), \(m=d\) (equivalent to VDeH), and an optimally tuned \(m\), obtained through hyperparameter search. The value of \(t\) is fixed at 4096, while \(\psi\) is tuned for best retrieval performance.

The results, presented in Table \ref{tab:vd_impl}, reveal a substantial performance gap between {\modelname} and {\modelnamevd} with \(m=d\). For example, in terms of nDCG@10, {\modelnamevd} (\(m=d\)) exhibits performance decreases of 50.86\% and 21.30\% on the two datasets, respectively. However, when \(m\) is reduced to smaller values such as \(m=1\) or the optimally tuned \(m\) (which also tends to be small), the performance gap almost disappears. \modelnamevd{} (\(m=1\)) is slightly worse than \modelname{}, which is aligned with the analysis in Appendix \ref{app:diverse_partitioning}, yet \modelnamevd{} with optimally tuned $m$ can surpass \modelname{} in MMR@10 and nDCG@10. Nevertheless, we recommend {\modelname} over the optimally tuned {\modelnamevd} for two primary reasons. First, {\modelnamevd} requires tuning more hyperparameters (i.e., the extra parameter $m$ beside $t, \psi$). Second, the mapping process of {\modelnamevd} is computationally more expensive compared to \modelname{}.


\section{Conclusion}


This paper presents \modelname, a learning-free method based on the Isolation Kernel that transforms LLM embeddings into binary representations (in IKE space). \modelname{} ensembles diverse partitions to reduce retrieval loss as the ensemble grows. It is lightweight, binary-encoded, and adaptable in code length, offering low memory use and fast mapping. Theoretical analysis highlights the importance of \emph{high diversity} in \modelname{} and its variant \modelnamevd{}.

Experiments on multiple text retrieval datasets show that {\modelname} cuts memory and search time by up to 16.7× while maintaining or even surpassing LLM-level accuracy. Compared to the learning-based CSR, {\modelname} achieves order-of-magnitude speedups without retraining. When combined with HNSW, it outperforms other learning-free compression methods (rpLSH, ScaNN, PQfs) in the ANN setting, delivering from 2.9\(\times\) to 10\(\times\) higher throughput with comparable accuracy.

\section*{Limitations}
While our method demonstrates strong performance on the text retrieval task, its effectiveness on cross-modal retrieval tasks (e.g., Image‑to‑Text or Text‑to‑Image) remains limited. This may be attributed to the modality gap~\cite{liang2022mind} between textual and visual representations, which poses a challenge for cross‑modal alignment. This aspect has not been investigated in this paper and is left for future research.

\section*{Acknowledgements}
Kai Ming Ting is supported by the National Natural Science Foundation of China (Grant No. 92470116 \& W2531050).
This project is also supported by the National Natural Science Foundation of China (Grant No. W2532049) and the Fundamental and Interdisciplinary Disciplines Breakthrough Plan of the Ministry of Education of China (No. JYB2025XDXM118).

\bibliography{sample}

@inproceedings{ting2018isolation,
  title={Isolation kernel and its effect on SVM},
  author={Ting, Kai Ming and Zhu, Yue and Zhou, Zhi-Hua},
  booktitle={SIGKDD},
  pages={2329--2337},
  year={2018}
}

@INPROCEEDINGS{Isolation_Forest,
  author={Liu, Fei Tony and Ting, Kai Ming and Zhou, Zhi-Hua},
  booktitle={IEEE ICDM}, 
  title={Isolation Forest}, 
  year={2008},
  volume={},
  number={},
  pages={413-422}
}

@inproceedings{
llm2vec,
title={{LLM2V}ec: Large Language Models Are Secretly Powerful Text Encoders},
author={Parishad BehnamGhader and Vaibhav Adlakha and Marius Mosbach and Dzmitry Bahdanau and Nicolas Chapados and Siva Reddy},
booktitle={COLM},
year={2024}
}

@ARTICLE{douze2024faiss,
  author={Douze, Matthijs and Guzhva, Alexandr and Deng, Chengqi and Johnson, Jeff and Szilvasy, Gergely and Mazaré, Pierre-Emmanuel and Lomeli, Maria and Hosseini, Lucas and Jégou, Hervé},
  journal={IEEE TBD}, 
  title={THE FAISS LIBRARY}, 
  year={2025},
  volume={},
  number={},
  pages={1-17}}

@article{hnsw,
  title={Efficient and robust approximate nearest neighbor search using hierarchical navigable small world graphs},
  author={Malkov, Yu A and Yashunin, Dmitry A},
  journal={IEEE Trans. Pattern Anal. Mach. Intell.},
  volume={42},
  number={4},
  pages={824--836},
  year={2018}
}

@ARTICLE{PQ,
  author={Jégou, Herve and Douze, Matthijs and Schmid, Cordelia},
  journal={IEEE Trans. Pattern Anal. Mach. Intell.}, 
  title={Product Quantization for Nearest Neighbor Search}, 
  year={2011},
  volume={33},
  number={1},
  pages={117-128}
}

@inbook{diskann-github,
author = {Subramanya, Suhas Jayaram and Devvrit and Kadekodi, Rohan and Krishaswamy, Ravishankar and Simhadri, Harsha Vardhan},
title = {DiskANN: fast accurate billion-point nearest neighbor search on a single node},
year = {2019},
publisher = {Curran Associates Inc.},
address = {Red Hook, NY, USA},
booktitle = {NeurIPS},
articleno = {1233},
numpages = {11}
}

@article{johnson2019billion,
  title={Billion-scale similarity search with {GPUs}},
  author={Johnson, Jeff and Douze, Matthijs and J{\'e}gou, Herv{\'e}},
  journal={IEEE TBD},
  volume={7},
  number={3},
  pages={535--547},
  year={2019}
}

@article{qwen3embedding,
  title={Qwen3 Embedding: Advancing Text Embedding and Reranking Through Foundation Models},
  author={Zhang, Yanzhao and Li, Mingxin and Long, Dingkun and Zhang, Xin and Lin, Huan and Yang, Baosong and Xie, Pengjun and Yang, An and Liu, Dayiheng and Lin, Junyang and Huang, Fei and Zhou, Jingren},
  journal={arXiv preprint arXiv:2506.05176},
  year={2025}
}

@inproceedings{scann,
  title={Accelerating Large-Scale Inference with Anisotropic Vector Quantization},
  author={Guo, Ruiqi and Sun, Philip and Lindgren, Erik and Gu, Quan and Simcha, David and Chern, Felix and Kumar, Sanjiv},
  booktitle={ICML},
  pages={3887--3896},
  year={2020}
}

@inproceedings{rpLSH,
author = {Charikar, Moses S.},
title = {Similarity estimation techniques from rounding algorithms},
year = {2002},
address = {New York, NY, USA},
booktitle = {STOC},
pages = {380–388},
numpages = {9},
location = {Montreal, Quebec, Canada},
}

@ARTICLE{rcLSH,
  author={Liu, Huawen and Zhou, Wenhua and Wu, Zongda and Zhang, Shichao and Li, Gang and Li, Xuelong},
  journal={IEEE TKDE}, 
  title={Refining Codes for Locality Sensitive Hashing}, 
  year={2024},
  volume={36},
  number={3},
  pages={1274-1284},
  }

@article{revise_LSH,
author = {Cai, Deng},
year = {2016},
month = {12},
pages = {},
title = {A Revisit of Hashing Algorithms for Approximate Nearest Neighbor Search},
volume = {PP},
journal = {IEEE TKDE},
doi = {10.1109/TKDE.2019.2953897}
}

@inproceedings{xu2025VDeH,
author = {Xu, Yang and Ting, Kai Ming},
title = {Voronoi Diagram Encoded Hashing},
year = {2025},
address = {Berlin, Heidelberg},
booktitle = {ECML PKDD},
pages = {87–103},
numpages = {17},
location = {Porto, Portugal}
}

@article{PQfs,
author = {Andr\'{e}, Fabien and Kermarrec, Anne-Marie and Le Scouarnec, Nicolas},
title = {Cache locality is not enough: high-performance nearest neighbor search with product quantization fast scan},
year = {2015},
issue_date = {December 2015},
volume = {9},
number = {4},
journal = {Proc. VLDB Endow.},
month = dec,
pages = {288–299},
numpages = {12}
}

@inproceedings{thakur2021beir,
  author = {Nandan Thakur and Nils Reimers and Andreas R{\"u}ckl{\'e} and Abhishek Srivastava and Iryna Gurevych},
  booktitle = {NeurIPS},
  title = {{BEIR}: A Heterogeneous Benchmark for Zero-shot Evaluation of Information Retrieval Models},
  year = {2021},
}

@inproceedings{touchev3,
  address_ = {Washington, D.C.},
  author = {Nandan Thakur and Luiz Bonifacio and Maik {Fr\"{o}be} and Alexander Bondarenko and Ehsan Kamalloo and Martin Potthast and Matthias Hagen and Jimmy Lin},
  booktitle = {SIGIR},
  title = {Systematic Evaluation of Neural Retrieval Models on the {Touch\'{e}} 2020 Argument Retrieval Subset of {BEIR}},
  year = {2024},
}

@inproceedings{istella22,
author = {Dato, Domenico and MacAvaney, Sean and Nardini, Franco Maria and Perego, Raffaele and Tonellotto, Nicola},
title = {The Istella22 Dataset: Bridging Traditional and Neural Learning to Rank Evaluation},
year = {2022},
address = {New York, NY, USA},
booktitle = {SIGIR},
pages = {3099–3107},
numpages = {9},
location = {Madrid, Spain},
series = {SIGIR '22}
}

@inproceedings{trecdl23,
  author={Nick Craswell and Bhaskar Mitra and Emine Yilmaz and Hossein A. Rahmani and Daniel Campos and Jimmy Lin and Ellen M. Voorhees and Ian Soboroff},
  title={Overview of the TREC 2023 Deep Learning Track},
  year={2023},
  cdate={1672531200000},
  booktitle={TREC}
}

@article{nie2024text-emb-llm,
  author       = {Zhijie Nie and
                  Zhangchi Feng and
                  Mingxin Li and
                  Cunwang Zhang and
                  Yanzhao Zhang and
                  Dingkun Long and
                  Richong Zhang},
  title        = {When Text Embedding Meets Large Language Model: {A} Comprehensive Survey},
  journal      = {CoRR},
  volume       = {abs/2412.09165},
  year         = {2024},
  eprinttype    = {arXiv},
  eprint       = {2412.09165},
}

@article{kusupati2022matryoshka,
  title={Matryoshka representation learning},
  author={Kusupati, Aditya and Bhatt, Gantavya and Rege, Aniket and Wallingford, Matthew and Sinha, Aditya and Ramanujan, Vivek and Howard-Snyder, William and Chen, Kaifeng and Kakade, Sham and Jain, Prateek and others},
  journal={NeurIPS},
  volume={35},
  pages={30233--30249},
  year={2022}
}

@inproceedings{wenbeyond,
  title={Beyond Matryoshka: Revisiting Sparse Coding for Adaptive Representation},
  author={Wen, Tiansheng and Wang, Yifei and Zeng, Zequn and Peng, Zhong and Su, Yudi and Liu, Xinyang and Chen, Bo and Liu, Hongwei and Jegelka, Stefanie and You, Chenyu},
  booktitle={ICML},
  year={2025},
  pages={}
}

@misc{openai2024embedding,
  author       = {OpenAI},
  title        = {New Embedding Models and API Updates},
  year         = {2024},
  howpublished = {\url{https://openai.com/index/new-embedding-models-and-api-updates}}
}

@article{nussbaum2025nomic,
  title   = {Nomic Embed: Training a Reproducible Long Context Text Embedder},
  author  = {Nussbaum, Zach and Morris, John X. and Duderstadt, Brandon and Mulyar, Andriy},
  journal = {TMLR},
  year    = {2025},
  month   = feb
}

@article{yu2024arctic,
  title   = {Arctic-embed 2.0: Multilingual Retrieval without Compromise},
  author  = {Yu, Peng and Merrick, Luke and Nuti, Giovanni and Campos, Daniel},
  journal = {arXiv preprint arXiv:2412.04506},
  year    = {2024}
}

@article{ting2024nearest,
  title   = {Is it possible to find the single nearest neighbor of a query in high dimensions?},
  author  = {Ting, Kai Ming and Washio, Takashi and Zhu, Ye and Xu, Yang and Zhang, Kaifeng},
  journal = {AIJ},
  volume  = {336},
  pages   = {104206},
  year    = {2024},
  issn    = {0004-3702}
}

@article{van2009dimensionality,
  title={Dimensionality reduction: a comparative},
  author={Van Der Maaten, L. and Postma, E. and Vanden Herik, J. and et al.},
  journal={JMLR},
  volume={10},
  pages={66--71},
  year={2009}
}

@article{cai2019revisit,
  title={A revisit of hashing algorithms for approximate nearest neighbor search},
  author={Cai, Deng},
  journal={IEEE TKDE},
  volume={33},
  number={6},
  pages={2337--2348},
  year={2019}
}

@misc{roberta,
      title={RoBERTa: A Robustly Optimized BERT Pretraining Approach}, 
      author={Yinhan Liu and Myle Ott and Naman Goyal and Jingfei Du and Mandar Joshi and Danqi Chen and Omer Levy and Mike Lewis and Luke Zettlemoyer and Veselin Stoyanov},
      year={2019},
      eprint={1907.11692},
      archivePrefix={arXiv},
      primaryClass={cs.CL}, 
}

@article{qwen-gte,
  title={Towards general text embeddings with multi-stage contrastive learning},
  author={Li, Zehan and Zhang, Xin and Zhang, Yanzhao and Long, Dingkun and Xie, Pengjun and Zhang, Meishan},
  journal={arXiv preprint arXiv:2308.03281},
  year={2023}
}

@article{t5-11b,
  author  = {Colin Raffel and Noam Shazeer and Adam Roberts and Katherine Lee and Sharan Narang and Michael Matena and Yanqi Zhou and Wei Li and Peter J. Liu},
  title   = {Exploring the Limits of Transfer Learning with a Unified Text-to-Text Transformer},
  journal = {JMLR},
  year    = {2020},
  volume  = {21},
  number  = {140},
  pages   = {1-67},
}

@inproceedings{SH,
 author = {Weiss, Yair and Torralba, Antonio and Fergus, Rob},
 booktitle = {NeurIPS},
 editor = {D. Koller and D. Schuurmans and Y. Bengio and L. Bottou},
 pages = {},
 title = {Spectral Hashing},
 volume = {21},
 year = {2008}
}

@InProceedings{CBE,
  title = 	 {Circulant Binary Embedding},
  author = 	 {Yu, Felix and Kumar, Sanjiv and Gong, Yunchao and Chang, Shih-Fu},
  booktitle =  {ICML},
  pages = 	 {946--954},
  year = 	 {2014},
  volume = 	 {32},
  number =       {2},
  series = 	 {Proceedings of Machine Learning Research},
  address = 	 {Bejing, China},
  month = 	 {22--24 Jun}
}

@article{ITQ,
author = {Gong, Yunchao and Lazebnik, Svetlana and Gordo, Albert and Perronnin, Florent},
year = {2013},
month = {12},
pages = {2916-29},
title = {Iterative Quantization: A Procrustean Approach to Learning Binary Codes for Large-Scale Image Retrieval},
volume = {35},
journal = {IEEE Trans. Pattern Anal. Mach. Intell.}
}

@inproceedings{BP,
  title={Learning binary codes for high-dimensional data using bilinear projections},
  author={Gong, Yunchao and Kumar, Sanjiv and Rowley, Henry A and Lazebnik, Svetlana},
  booktitle={CVPR},
  pages={484--491},
  year={2013}
}

@inproceedings{SP,
  title={Sparse projections for high-dimensional binary codes},
  author={Xia, Yan and He, Kaiming and Kohli, Pushmeet and Sun, Jian},
  booktitle={CVPR},
  pages={3332--3339},
  year={2015}
}

@INPROCEEDINGS{SpH,
  author={Lan, Yinhe and Weng, Zhenyu and Zhu, Yuesheng},
  booktitle={VCIP}, 
  title={A new spherical hashing method in a low-dimensional isotropic space}, 
  year={2017},
  volume={},
  number={},
  pages={1-4},
}

@article{DSH,
  title={Density sensitive hashing},
  author={Jin, Zhongming and Li, Cheng and Lin, Yue and Cai, Deng},
  journal={IEEE Trans. Cybern.},
  volume={44},
  number={8},
  pages={1362--1371},
  year={2013}
}

@article{SELVE,
  title={A sparse embedding and least variance encoding approach to hashing},
  author={Zhu, Xiaofeng and Zhang, Lei and Huang, Zi},
  journal={IEEE Trans. Image Process.},
  volume={23},
  number={9},
  pages={3737--3750},
  year={2014}
}

@article{openai_mrl,
  title={New embedding models and api updates.},
  author={OpenAI},
  year={2024}
}

@article{lee2024gecko_mrl,
  title={Gecko: Versatile text embeddings distilled from large language models},
  author={Lee, Jinhyuk and Dai, Zhuyun and Ren, Xiaoqi and Chen, Blair and Cer, Daniel and Cole, Jeremy R and Hui, Kai and Boratko, Michael and Kapadia, Rajvi and Ding, Wen and others},
  journal={arXiv preprint arXiv:2403.20327},
  year={2024}
}

@article{enevoldsen2025mmtebmassivemultilingualtext,
  author = {Kenneth Enevoldsen and Isaac Chung and Imene Kerboua and Márton Kardos and Ashwin Mathur and David Stap and Jay Gala and Wissam Siblini and Dominik Krzemiński and Genta Indra Winata and Saba Sturua and Saiteja Utpala and Mathieu Ciancone and Marion Schaeffer and Gabriel Sequeira and Diganta Misra and Shreeya Dhakal and Jonathan Rystrøm and Roman Solomatin and Ömer Çağatan and Akash Kundu and Martin Bernstorff and Shitao Xiao and Akshita Sukhlecha and Bhavish Pahwa and Rafał Poświata and Kranthi Kiran GV and Shawon Ashraf and Daniel Auras and Björn Plüster and Jan Philipp Harries and Loïc Magne and Isabelle Mohr and Mariya Hendriksen and Dawei Zhu and Hippolyte Gisserot-Boukhlef and Tom Aarsen and Jan Kostkan and Konrad Wojtasik and Taemin Lee and Marek Šuppa and Crystina Zhang and Roberta Rocca and Mohammed Hamdy and Andrianos Michail and John Yang and Manuel Faysse and Aleksei Vatolin and Nandan Thakur and Manan Dey and Dipam Vasani and Pranjal Chitale and Simone Tedeschi and Nguyen Tai and Artem Snegirev and Michael Günther and Mengzhou Xia and Weijia Shi and Xing Han Lù and Jordan Clive and Gayatri Krishnakumar and Anna Maksimova and Silvan Wehrli and Maria Tikhonova and Henil Panchal and Aleksandr Abramov and Malte Ostendorff and Zheng Liu and Simon Clematide and Lester James Miranda and Alena Fenogenova and Guangyu Song and Ruqiya Bin Safi and Wen-Ding Li and Alessia Borghini and Federico Cassano and Hongjin Su and Jimmy Lin and Howard Yen and Lasse Hansen and Sara Hooker and Chenghao Xiao and Vaibhav Adlakha and Orion Weller and Siva Reddy and Niklas Muennighoff},
  journal = {arXiv preprint arXiv:2502.13595},
  title = {MMTEB: Massive Multilingual Text Embedding Benchmark},
  year = {2025},
}

@article{breiman2001random,
  title={Random forests},
  author={Breiman, Leo},
  journal={Machine learning},
  volume={45},
  number={1},
  pages={5--32},
  year={2001}
}

@article{grpah-based-ann-survey,
author = {Azizi, Ilias and Echihabi, Karima and Palpanas, Themis},
title = {Graph-Based Vector Search: An Experimental Evaluation of the State-of-the-Art},
year = {2025},
issue_date = {February 2025},
address = {New York, NY, USA},
volume = {3},
number = {1},
journal = {Proc. ACM Manag. Data},
month = feb,
articleno = {43},
numpages = {31}
}

@article{wei2025subspace,
  title={Subspace collision: an efficient and accurate framework for high-dimensional approximate nearest neighbor search},
  author={Wei, Jiuqi and Lee, Xiaodong and Liao, Zhenyu and Palpanas, Themis and Peng, Botao},
  journal={Proc. ACM Manag. Data},
  volume={3},
  number={1},
  pages={1--29},
  year={2025}
}

@article{ann-multi-dim-database,
author = {Peng, Yun and Choi, Byron and Chan, Tsz Nam and Yang, Jianye and Xu, Jianliang},
title = {Efficient Approximate Nearest Neighbor Search in Multi-dimensional Databases},
year = {2023},
issue_date = {May 2023},
address = {New York, NY, USA},
volume = {1},
number = {1},
journal = {Proc. ACM Manag. Data},
month = may,
articleno = {54},
numpages = {27},
}

@article{starling,
author = {Wang, Mengzhao and Xu, Weizhi and Yi, Xiaomeng and Wu, Songlin and Peng, Zhangyang and Ke, Xiangyu and Gao, Yunjun and Xu, Xiaoliang and Guo, Rentong and Xie, Charles},
title = {Starling: An I/O-Efficient Disk-Resident Graph Index Framework for High-Dimensional Vector Similarity Search on Data Segment},
year = {2024},
issue_date = {February 2024},
address = {New York, NY, USA},
volume = {2},
number = {1},
journal = {Proc. ACM Manag. Data},
month = mar,
articleno = {14},
numpages = {27},
}

@inproceedings{li2018L2H,
author = {Li, Jinfeng and Yan, Xiao and Zhang, Jian and Xu, An and Cheng, James and Liu, Jie and Ng, Kelvin K. W. and Cheng, Ti-chung},
title = {A General and Efficient Querying Method for Learning to Hash},
year = {2018},
address = {New York, NY, USA},
booktitle = {SIGMOD},
pages = {1333–1347},
numpages = {15},
location = {Houston, TX, USA}
}

@article{PQcache,
author = {Zhang, Hailin and Ji, Xiaodong and Chen, Yilin and Fu, Fangcheng and Miao, Xupeng and Nie, Xiaonan and Chen, Weipeng and Cui, Bin},
title = {PQCache: Product Quantization-based KVCache for Long Context LLM Inference},
year = {2025},
issue_date = {June 2025},
address = {New York, NY, USA},
volume = {3},
number = {3},
journal = {Proc. ACM Manag. Data},
month = jun,
articleno = {201},
numpages = {30}
}

@inproceedings{fever,
    title = "{FEVER}: a Large-scale Dataset for Fact Extraction and {VER}ification",
    author = "Thorne, James  and
      Vlachos, Andreas  and
      Christodoulopoulos, Christos  and
      Mittal, Arpit",
    editor = "Walker, Marilyn  and
      Ji, Heng  and
      Stent, Amanda",
    booktitle = "Proceedings of the 2018 Conference of the North {A}merican Chapter of the Association for Computational Linguistics: Human Language Technologies, Volume 1 (Long Papers)",
    month = jun,
    year = "2018",
    address = "New Orleans, Louisiana",
    publisher = "Association for Computational Linguistics",
    url = "https://aclanthology.org/N18-1074/",
    doi = "10.18653/v1/N18-1074",
    pages = "809--819"
}

@inproceedings{bert,
    title = "{BERT}: Pre-training of Deep Bidirectional Transformers for Language Understanding",
    author = "Devlin, Jacob  and
      Chang, Ming-Wei  and
      Lee, Kenton  and
      Toutanova, Kristina",
    editor = "Burstein, Jill  and
      Doran, Christy  and
      Solorio, Thamar",
    booktitle = "Proceedings of the 2019 Conference of the North {A}merican Chapter of the Association for Computational Linguistics: Human Language Technologies, Volume 1 (Long and Short Papers)",
    month = jun,
    year = "2019",
    address = "Minneapolis, Minnesota",
    publisher = "Association for Computational Linguistics",
    url = "https://aclanthology.org/N19-1423/",
    doi = "10.18653/v1/N19-1423",
    pages = "4171--4186"
}

@inproceedings{llama2vec,
    title = "{L}lama2{V}ec: Unsupervised Adaptation of Large Language Models for Dense Retrieval",
    author = "Liu, Zheng  and
      Li, Chaofan  and
      Xiao, Shitao  and
      Shao, Yingxia  and
      Lian, Defu",
    editor = "Ku, Lun-Wei  and
      Martins, Andre  and
      Srikumar, Vivek",
    booktitle = "Proceedings of the 62nd Annual Meeting of the Association for Computational Linguistics (Volume 1: Long Papers)",
    month = aug,
    year = "2024",
    address = "Bangkok, Thailand",
    publisher = "Association for Computational Linguistics",
    url = "https://aclanthology.org/2024.acl-long.191/",
    doi = "10.18653/v1/2024.acl-long.191",
    pages = "3490--3500"
}

@inproceedings{hotpotqa,
    title = "{H}otpot{QA}: A Dataset for Diverse, Explainable Multi-hop Question Answering",
    author = "Yang, Zhilin  and
      Qi, Peng  and
      Zhang, Saizheng  and
      Bengio, Yoshua  and
      Cohen, William  and
      Salakhutdinov, Ruslan  and
      Manning, Christopher D.",
    editor = "Riloff, Ellen  and
      Chiang, David  and
      Hockenmaier, Julia  and
      Tsujii, Jun{'}ichi",
    booktitle = "Proceedings of the 2018 Conference on Empirical Methods in Natural Language Processing",
    month = oct # "-" # nov,
    year = "2018",
    address = "Brussels, Belgium",
    publisher = "Association for Computational Linguistics",
    url = "https://aclanthology.org/D18-1259/",
    doi = "10.18653/v1/D18-1259",
    pages = "2369--2380"
}

@inproceedings{qiu-etal-2022-efficient,
    title = "Efficient Document Retrieval by End-to-End Refining and Quantizing {BERT} Embedding with Contrastive Product Quantization",
    author = "Qiu, Zexuan  and
      Su, Qinliang  and
      Yu, Jianxing  and
      Si, Shijing",
    editor = "Goldberg, Yoav  and
      Kozareva, Zornitsa  and
      Zhang, Yue",
    booktitle = "Proceedings of the 2022 Conference on Empirical Methods in Natural Language Processing",
    month = dec,
    year = "2022",
    address = "Abu Dhabi, United Arab Emirates",
    publisher = "Association for Computational Linguistics",
    url = "https://aclanthology.org/2022.emnlp-main.54/",
    doi = "10.18653/v1/2022.emnlp-main.54",
    pages = "853--863"
}

@inproceedings{lin-2025-operational,
    title = "Operational Advice for Dense and Sparse Retrievers: {HNSW}, Flat, or Inverted Indexes?",
    author = "Lin, Jimmy",
    editor = "Rehm, Georg  and
      Li, Yunyao",
    booktitle = "Proceedings of the 63rd Annual Meeting of the Association for Computational Linguistics (Volume 6: Industry Track)",
    month = jul,
    year = "2025",
    address = "Vienna, Austria",
    publisher = "Association for Computational Linguistics",
    url = "https://aclanthology.org/2025.acl-industry.61/",
    doi = "10.18653/v1/2025.acl-industry.61",
    pages = "865--872",
    ISBN = "979-8-89176-288-6"
}

@inproceedings{muennighoff2022mteb,
    title = "{MTEB}: Massive Text Embedding Benchmark",
    author = "Muennighoff, Niklas  and
      Tazi, Nouamane  and
      Magne, Loic  and
      Reimers, Nils",
    editor = "Vlachos, Andreas  and
      Augenstein, Isabelle",
    booktitle = "Proceedings of the 17th Conference of the European Chapter of the Association for Computational Linguistics",
    month = may,
    year = "2023",
    address = "Dubrovnik, Croatia",
    publisher = "Association for Computational Linguistics",
    url = "https://aclanthology.org/2023.eacl-main.148/",
    doi = "10.18653/v1/2023.eacl-main.148",
    pages = "2014--2037"
}

@inproceedings{MRR,
  title={Mean Reciprocal Rank},
  author={Nick Craswell},
  booktitle={Encyclopedia of Database Systems},
  year={2009},
  url={https://api.semanticscholar.org/CorpusID:46061205}
}

@article{nDCG,
author = {J\"{a}rvelin, Kalervo and Kek\"{a}l\"{a}inen, Jaana},
title = {Cumulated gain-based evaluation of {IR} techniques},
year = {2002},
issue_date = {October 2002},
publisher = {Association for Computing Machinery},
address = {New York, NY, USA},
volume = {20},
number = {4},
issn = {1046-8188},
journal = {ACM Trans. Inf. Syst.},
month = oct,
pages = {422–446},
numpages = {25},
keywords = {Graded relevance judgments, cumulated gain}
}

@article{liang2022mind,
  title={Mind the gap: Understanding the modality gap in multi-modal contrastive representation learning},
  author={Liang, Victor Weixin and Zhang, Yuhui and Kwon, Yongchan and Yeung, Serena and Zou, James Y},
  journal={NeurIPS},
  volume={35},
  pages={17612--17625},
  year={2022}
}

@article{relevanceFeatureMapping ,
author = {Zhou, Guang-Tong and Ting, Kai Ming and Liu, Fei Tony and Yin, Yilong},
title = {Relevance feature mapping for content-based multimedia information retrieval},
year = {2012},
issue_date = {April, 2012},
publisher = {Elsevier Science Inc.},
address = {USA},
volume = {45},
number = {4},
issn = {0031-3203},
journal = {Pattern Recogn.},
month = apr,
pages = {1707–1720},
numpages = {14},
}
\clearpage
\appendix

\section{Related Work}
\subsection{LLM Embedded Representation}
Text representation plays a crucial role in tasks such as semantic matching, clustering, and information retrieval \cite{nie2024text-emb-llm}. In recent years, pretrained language models with millions or even billions of parameters have been proposed for a wide range of NLP tasks, including text representation. Early approaches relied on encoder-based models such as BERT \cite{bert} and RoBERTa \cite{roberta}. More recently, there has been a paradigm shift toward leveraging large language models (LLMs) for text embeddings, with notable examples including LLM2Vec \cite{llm2vec}, Llama2Vec \cite{llama2vec}, and Qwen Embedding \cite{qwen-gte,qwen3embedding}. These LLM-based methods now represent the state of the art in text representation.

However, language model-based embeddings typically have high dimension: for example, the hidden states of T5-11B \cite{t5-11b} have 1024 dimensions, while mainstream 7B decoder-only LLMs use 4096 dimensions. Such high-dimensional embeddings substantially increase storage and inference costs. To address this, \citet{kusupati2022matryoshka} introduced a flexible technique that enables fine-grained control over embedding size, balancing latency and accuracy. This approach has since been extended to various applications \cite{openai2024embedding,nussbaum2025nomic,yu2024arctic}. Recently, \citet{wenbeyond} proposed CSR, which combines sparse coding with contrastive learning to produce efficient sparse embeddings with minimal performance loss. 

Similar to \cite{kusupati2022matryoshka,wenbeyond}, our study also targets representation efficiency and is complementary to the aforementioned LLM-based representation. However, while these studies rely on supervised learning to reduce performance degradation relative to LLM embeddings, our method does not require learning and instead leverages IK for transformation. 

\subsection{Efficient Retrieval}
For efficient retrieval in high-dimensional spaces, techniques such as dimensionality reduction \cite{van2009dimensionality}, product quantization \cite{scann, PQ, PQfs}, and hashing \cite{rpLSH, xu2025VDeH} are commonly applied, though often at the expense of accuracy. Our study falls within the hashing paradigm, yet demonstrates strong potential to preserve—and in some cases even improve—retrieval accuracy.

Hashing techniques can be broadly categorized into training-based approaches (Learning to Hash, L2H) \cite{rcLSH,SH,CBE,ITQ,BP,SP,DSH,SELVE,SpH} and learning-free approaches \cite{rpLSH,cai2019revisit,xu2025VDeH}. L2H aims to map high-dimensional vectors into a binary coding space for efficient processing and retrieval. In general, learning is to ensure three key criteria \cite{xu2025VDeH} for an effective hashing: full space coverage, entropy maximization, and bit independence. 

Recently, \citet{xu2025VDeH} showed that Voronoi Diagrams (VD) offer a natural alternative to L2H, as they inherently satisfy these three criteria. Building on this insight, they proposed VDeH, a learning-free method that leverages a VD–based implementation of the Isolation Kernel for effective hashing. Though VDeH has been shown to outperform many L2H methods for non-LLM-related databases, our early examination has revealed that VDeH performs poorly in LLM-embedded databases (see Figure \ref{fig:para_m}). To address this limitation, we propose a generalized version of VDeH that is amenable to LLM-embedded databases in this paper. It is worth noting that prior work \cite{relevanceFeatureMapping} also explored the application of Isolation Forest in the field of retrieval. Their method employs the path length of data points on each Isolation Tree as the mapped feature representation, along with a weighted‑sum similarity calculation, rather than adopting a hashing paradigm. This, however, results in high computational overhead and difficulty in low‑bit compression.

\section{Extension of section \ref{sec:preliminaries}}\label{app:sec:preliminaries}
\subsection{Detailed Description of Isolation Kernel}\label{app:IK_detail}


In practice, Isolation Kernel $K_\psi$ is usually estimated from a finite number of $t$ partitions $H_i \enspace (i=1,\dots,t)$ and each $H_i$ is derived from a data subset $\mathcal{D}_i\subset D$:
\begin{equation}
\begin{aligned}
    K_\psi(\mathbf{x},\mathbf{y} \mid D) 
    &\simeq \frac{1}{t} \,\sum_{i=1}^t \mathbb{I}(\mathbf{x},\mathbf{y} \in \theta \mid \theta \in H_i) \\
    &= \frac{1}{t} \, \sum_{i=1}^t \sum_{\theta_j \in H_i} \mathbb{I}(\mathbf{x} \in \theta_j)\mathbb{I}(\mathbf{y} \in \theta_j).
\end{aligned}
\end{equation}
Given \(H_i\), there exist \(\psi\) isolating partitions \(\theta_j \in H_i \enspace\) \((j\)=\(1,\dots,\psi)\). For any data point \(\mathbf{x}\), we define
\begin{displaymath}
\Phi_{ij}(\mathbf{x}\mid D)
=
\mathbb{I}\bigl(\mathbf{x} \in \theta_j \,\bigm|\; \theta_j \in H_i\bigr).
\end{displaymath}
Since \(\mathbf{x}\) can belong to exactly one of these \(\psi\) isolating partitions, the vector
\begin{align*}
&\Phi_{i}(\mathbf{x}\mid D)
\\&=
\bigl[
\Phi_{i1}(\mathbf{x}\mid D),\;\Phi_{i2}(\mathbf{x}\mid D),\;\dots,\;\Phi_{i\psi}(\mathbf{x}\mid D)\bigr]
\end{align*}
is a \(\psi\)-dimensional one-hot encoding. Given $t$ partitions, $\Phi(\mathbf{x} \mid D)$ is the concatenation of $\Phi_1(\mathbf{x} \mid D),\dots, \Phi_t(\mathbf{x} \mid D)$.



In practice, each data point \(\mathbf{x}\) is mapped into a binary feature space of dimension \(t \times \psi\) via the mapping
\(
\Phi: \mathbf{x} \;\longmapsto\; \{0,1\}^{\,t \times \psi}.
\)
Under this representation, the Isolation Kernel \(K_\psi\) can be approximated by the normalized inner product:
\begin{equation}
\begin{aligned}
    K_\psi\bigl(\mathbf{x},\mathbf{y} \mid D\bigr)
    \simeq
    \frac{1}{t}\,\bigl\langle \Phi\bigl(\mathbf{x} \mid D\bigr),\,\Phi\bigl(\mathbf{y} \mid D\bigr)\bigr\rangle.
\end{aligned}
\end{equation}
For notational brevity, we henceforth omit the conditioning on \(D\) whenever the meaning is clear. 

Since the mapped data point \(\Phi(\mathbf{x})\) 
is highly sparse, with only \(t\) out of the \(t\times\psi\) entries equal to 1, we store it in an index‐based sparse representation 
\(\Phi^{\mathrm{idx}}(\mathbf{x})\in\mathbb{N}^t\). 
Specifically, the \(i\)-th component of \(\Phi^{\mathrm{idx}}(\mathbf{x})\) indicates the position of the sole ``1'' in the one‐hot vector \(\Phi_i(\mathbf{x})\) for all \(i\in [1,t]\). Formally, we define
\begin{displaymath}
\Phi^{\mathrm{idx}}(\mathbf{x})
=
\bigl[
\Phi_{1}^{\mathrm{idx}}(\mathbf{x}),\;\Phi_{2}^{\mathrm{idx}}(\mathbf{x}),\;\dots,\;\Phi_{t}^{\mathrm{idx}}(\mathbf{x})\bigr]
\end{displaymath}
\begin{displaymath}
\Phi^{\mathrm{idx}}_i(\mathbf{x})
\;=\;
\arg\max_{\,1\le j\le \psi\,} \Phi_{ij}(\mathbf{x}).
\end{displaymath}
By storing each element of \(\Phi^{\mathrm{idx}}(\mathbf{x})\) using \(\lceil \log_2(\psi) \rceil\) bits, the code length of each mapped data point becomes \(t\lceil \log_2(\psi) \rceil\) bits, significantly reducing the memory requirements. Thus, we can approximate the Isolation Kernel \(K_\psi\) as:
\begin{equation} \label{eq:Kpsi_idx_appendix}
\begin{aligned}
    K_\psi\bigl(\mathbf{x},\mathbf{y} \mid D\bigr)
    \simeq
    \frac{1}{t}\,\sum_{i=1}^t\mathbb{I}(\Phi^{\mathrm{idx}}_i(\mathbf{x})=\Phi^{\mathrm{idx}}_i(\mathbf{y})).
\end{aligned}
\end{equation}

\section{Theoretical Guarantees for IKE}
\label{app:ike-theory}

This section formally defines the Voronoi Diagram-based Isolation Kernel Embedding ({\modelnamevd}) operating on a randomly selected subspace (see Appendix~\ref{app:def_of_ike_vd}) and establishes that it preserves the three key criteria demonstrated for VDeH~\cite{xu2025VDeH}: \textbf{full space coverage}, \textbf{entropy maximization}, and \textbf{bit independence}. The proofs follow directly from the theoretical framework of~\cite{xu2025VDeH}; since the core mechanisms of random sampling and partitioning remain unchanged, we show that adding a random subspace projection does not affect the original guarantees. For brevity, we refer to the corresponding proofs and streamline the arguments where the reasoning is identical (see Appendix~\ref{app:properties_of_ike_vd}).

Furthermore, under suitable distributional assumptions, we prove that the iForest-based IKE also satisfies these three criteria (see Appendix~\ref{app:properties_of_ike}). Finally, by viewing both VDeH and \modelname{} as ensemble methods, we introduce \textbf{diverse partitioning} as a fourth criterion. We demonstrate that, compared to VDeH, \modelname{} produces more robust hash representations due to its ability to generate a more diverse set of base partitioners (see Appendix~\ref{app:diverse_partitioning}).


\subsection{Formal Definition of {\modelnamevd}} \label{app:def_of_ike_vd}

Let $\mathcal{X}=[x_1,\dots,x_N]\in\mathbb{R}^{d\times N}$ be the input dataset. An {\modelnamevd} partition is constructed via a three-step process: (1) We randomly select $m$ dimensions from the original $d$-dimensional embedding, which is equivalent to applying a projection function $P_J:\mathbb{R}^{d} \to \mathbb{R}^{m}$ that maps points to a random $m$-dimensional subspace; (2) Randomly sampling $\psi$ anchor points $\mathcal{D}=\{s_1,\dots,s_{\psi}\}\subset\mathcal{X}$; (3) Constructing a Voronoi Diagram $H_J$ in the $\mathbb{R}^{m}$ subspace using the projected anchor points $P_J(\mathcal{D})$. For any point $x\in\mathbb{R}^{d}$, its hash value is determined by which of the $\psi$ Voronoi cells its projection $P_{J}$ falls into. An {\modelnamevd} model consists of $t$ such partitions, each constructed independently.

\subsection{Proof of Criteria of {\modelnamevd}}\label{app:properties_of_ike_vd}

\begin{prop}[Full Space Coverage]
    The hashing scheme of {\modelnamevd} covers the entire input space, ensuring that every point can be assigned a binary code.
\end{prop}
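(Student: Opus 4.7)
The plan is to reduce the claim to two elementary facts: (i) the random projection $P_J:\mathbb{R}^d\to\mathbb{R}^m$ is a total map, defined for every point in the ambient space, and (ii) any Voronoi Diagram built on a non-empty anchor set $P_J(\mathcal{D})\subset\mathbb{R}^m$ tessellates the \emph{entire} subspace $\mathbb{R}^m$ via the nearest-anchor rule (with an arbitrary but fixed tie-breaking convention). Combining these two observations, the composition $x\mapsto \mathrm{cell}(P_J(x))$ is well defined on all of $\mathbb{R}^d$, so every point receives a valid index in $\{1,\dots,\psi\}$ from a single partition.

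First I would fix a single partition $H_J$ and an arbitrary $x\in\mathbb{R}^d$. Since $P_J$ coordinate-selects (or, equivalently, is a linear projection onto $m$ of the $d$ axes), the image $P_J(x)$ is unambiguous. Then I would recall that the Voronoi cells $\{V_1,\dots,V_\psi\}$ associated with $P_J(\mathcal{D})$ satisfy $\bigcup_{j=1}^{\psi} V_j = \mathbb{R}^m$ by construction, which forces $P_J(x)\in V_{j^\star}$ for some $j^\star\in\{1,\dots,\psi\}$; this assigns the one-hot vector $\Phi_i(x)$ (equivalently the index $\Phi_i^{\mathrm{idx}}(x)=j^\star$) in the $i$-th coordinate of the embedding.

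Next I would lift this argument to the ensemble: because the $t$ partitions $\{H_{J_1},\dots,H_{J_t}\}$ are constructed independently and each individually covers $\mathbb{R}^d$ by the previous step, their concatenation produces a valid code $\Phi^{\mathrm{idx}}(x)\in\{1,\dots,\psi\}^t$ for every $x\in\mathbb{R}^d$. I would explicitly note that this generalizes the VDeH full-coverage result of \cite{xu2025VDeH} (which corresponds to the special case $m=d$, i.e.\ $P_J=\mathrm{Id}$), and that the extra randomness introduced by $P_J$ is irrelevant to coverage because projecting before tessellating cannot shrink the pre-image of the union of cells below $\mathbb{R}^d$.

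Since the argument is essentially a composition of two surjective/total maps, there is no deep obstacle; the only subtle point worth spelling out is the tie-breaking on the measure-zero set of points equidistant to multiple anchors, which I would resolve by adopting the standard lexicographic-on-anchor-index convention so that every point, including ties, maps to a unique cell. This preserves the property that $\Phi_i(x)$ is a one-hot vector and keeps the resulting embedding well defined.
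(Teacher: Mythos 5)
Your proposal is correct and follows essentially the same route as the paper's proof: the projection $P_J$ is total, a Voronoi Diagram completely partitions the subspace $\mathbb{R}^m$, so the composition assigns every $x\in\mathbb{R}^d$ to exactly one cell. Your additional remarks on tie-breaking for equidistant points and on the explicit lift to the ensemble of $t$ partitions are sound refinements that the paper leaves implicit, but they do not change the argument.
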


\begin{proof}
    The {\modelnamevd} scheme inherently provides full space coverage. For any point $x\in \mathbb{R}^{d}$, the projection $P_J$ maps it to a unique point $P_J(x)$ in the subspace $\mathbb{R}^{m}$. By definition, a Voronoi Diagram forms a complete partition of the space in which it is constructed. Therefore, $P_J(x)$ is guaranteed to fall into exactly one Voronoi cell. This ensures that every point $x$ in the original $d$-dimensional space is assigned a hash code.
\end{proof}

\begin{prop}[Entropy Maximization]
    For a given dataset, every hash function in {\modelnamevd} covers approximately the same number of data points.
\end{prop}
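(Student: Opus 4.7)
The plan is to reduce the claim to the VDeH setting analyzed in~\cite{xu2025VDeH} and then verify that inserting the random subspace projection $P_J$ does not spoil the argument. First I would fix an arbitrary realization of $P_J$ and work conditionally: given $P_J$, the construction of the partition $H_J$ is a standard Voronoi-based hash on the projected points $P_J(\mathcal{X})$ using $\psi$ anchors sampled uniformly at random from among them, which is exactly the regime in which~\cite{xu2025VDeH} establishes entropy maximization for VDeH. I can therefore cite that result rather than reprove it from scratch, the only change being that the ambient dimension is $m$ instead of $d$.

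Next I would make the underlying symmetry explicit via an exchangeability argument. Because the anchor subset $\mathcal{D}$ is drawn uniformly at random from $\mathcal{X}$, the joint distribution of cell assignments is invariant under any relabeling of the $\psi$ cells. Formally, for any $x\in\mathcal{X}$ and any two indices $j,j'\in\{1,\dots,\psi\}$, swapping the $j$-th and $j'$-th anchor positions is a measure-preserving bijection on the space of anchor configurations, so $\Pr[x\in\theta_j]=\Pr[x\in\theta_{j'}]$, and hence each equals $1/\psi$. Linearity of expectation then yields $\mathbb{E}[N_j]=N/\psi$ for the number of data points $N_j$ falling in cell $\theta_j$, which is precisely the ``approximately equal coverage'' guarantee asserted by the proposition.

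Finally I would lift the conditioning on $P_J$ by observing that the exchangeability argument above holds pointwise in the projection: since the relabeling symmetry is preserved under every choice of $m$ coordinates, marginalizing over the random draw of $P_J$ leaves the conclusion intact. An optional strengthening would be a concentration step: applying a Chernoff-type bound, together with a negative-association or hypergeometric argument to handle the mild dependence introduced by sampling anchors without replacement, would upgrade the expectation-level identity to a high-probability statement of the form $N_j=(N/\psi)(1+o(1))$.

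The main obstacle I foresee is precisely this last quantitative step. The expectation identity is essentially immediate from symmetry, but the indicators $\mathbb{I}[x\in\theta_j]$ are not mutually independent because $\mathcal{D}$ is sampled without replacement and a single anchor influences the cells of many points simultaneously. A clean concentration bound therefore requires either a negative-association framework or an appeal to hypergeometric tail inequalities, and the dependence on $m$ after projection must also be tracked. For the ``approximately'' wording actually used in the proposition, however, the expectation-level argument combined with a direct appeal to the VDeH proof of~\cite{xu2025VDeH} should already be sufficient.
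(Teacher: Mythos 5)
Your proposal is correct and follows essentially the same route as the paper: both reduce the claim to Lemma~1 of VDeH (uniform cell probability $1/\psi$ from the random anchor sampling) and observe that the random subspace projection $P_J$ is immaterial because the symmetry holds conditionally on, and hence marginally over, the choice of projected coordinates. Your version is in fact slightly more explicit than the paper's --- you spell out the exchangeability argument behind $\Pr[x\in\theta_j]=1/\psi$ and the linearity-of-expectation step from per-point probabilities to expected cell counts, whereas the paper simply cites the lemma and asserts the conclusion; the concentration refinement you sketch is not attempted in the paper and is not needed for the ``approximately'' claim as stated.
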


\begin{proof}
    Based on Lemma 1 of the existing work~\cite{xu2025VDeH}, for a partition constructed from $\psi$ anchor points drawn \textit{i.i.d.} from the data distribution, the probability of an arbitrary point falling into any given Voronoi cell is uniformly $1/\psi$. 
    
    This criterion remains valid for {\modelnamevd}. The anchor points are still sampled \textit{i.i.d.} from the data distribution. When these points and the data are projected onto the same random subspace $\mathbb{R}^{m}$, their underlying distributional relationship is preserved. The entropy maximization criterion of VDeH is based on the random sampling process and is agnostic to the space's dimensionality. Thus, its conclusion holds true within the subspace $\mathbb{R}^{m}$:
    $$\text{Pr}(P_J(x) \in V_i) = \frac{1}{\psi},$$
    for every cell \(i=1, \dots, \psi\). This uniform probability ensures entropy maximization.
\end{proof}

\begin{prop}[Bit Independence]
    All bits in the final concatenated binary code generated by {\modelnamevd} are mutually independent.
\end{prop}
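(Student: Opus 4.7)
The plan is to treat the concatenated binary code as an outer concatenation of $t$ blocks (one per partition), each block being the $\lceil \log_2 \psi\rceil$-bit binary encoding of the cell index $\Phi^{\mathrm{idx}}_i(x)$. Mutual independence will then be established by showing (i) independence across the $t$ blocks and (ii) independence of the $\lceil \log_2 \psi\rceil$ bits within a single block, and combining the two.

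\paragraph{Step 1: inter-partition independence.} By construction, each partition $i \in \{1,\dots,t\}$ is built from its own independent random subspace projection $P_{J_i}$ (an independent sample of $m$ out of $d$ coordinates) and its own anchor subset $\mathcal{D}_i$ drawn i.i.d.\ from $\mathcal{X}$. For any fixed input $x$, the only randomness in $\Phi^{\mathrm{idx}}_i(x)$ comes from $(P_{J_i}, \mathcal{D}_i)$, and these tuples are mutually independent across $i$. Hence the random vectors $(\Phi^{\mathrm{idx}}_1(x),\dots,\Phi^{\mathrm{idx}}_t(x))$ are mutually independent, which immediately gives mutual independence of the $t$ blocks of bits.

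\paragraph{Step 2: intra-partition independence.} Fix one partition. The Entropy Maximization proposition established $\Pr(P_J(x)\in V_j)=1/\psi$ uniformly over $j=1,\dots,\psi$. Assuming $\psi=2^k$ (consistent with the paper's byte-aligned implementation, which supports only $n_b\in\{1,2,4,8\}$), write the cell index in its standard $k$-bit binary form $b_1 b_2 \cdots b_k$. For any subset $S\subseteq\{1,\dots,k\}$ and any assignment $(\beta_s)_{s\in S}$, the set of indices matching that assignment has cardinality $2^{k-|S|}$, so
\[
\Pr\bigl(\forall s\in S:\ b_s=\beta_s\bigr) \;=\; \frac{2^{k-|S|}}{2^k} \;=\; 2^{-|S|} \;=\; \prod_{s\in S}\Pr(b_s=\beta_s),
\]
which is exactly the definition of mutual independence of $b_1,\dots,b_k$.

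\paragraph{Step 3: combine.} A standard probability fact says that if $(Y_1,\dots,Y_t)$ are mutually independent random vectors and each $Y_i$ has mutually independent coordinates, then the concatenation has all $tk$ coordinates mutually independent: for any sub-collection one factors the joint first across blocks (Step 1) and then within each block (Step 2). Applying this to $Y_i$ = the $k$-bit code of $\Phi^{\mathrm{idx}}_i(x)$ yields the proposition.

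\paragraph{Main obstacle.} The delicate point is Step 2 when $\psi$ is not a power of two. There the index lives in $\{0,\dots,\psi-1\}$ but is encoded with $\lceil \log_2\psi\rceil$ bits, leaving unused codewords that break exact factorization. I would handle this by stating the power-of-two assumption on $\psi$ explicitly and pointing to the byte-alignment constraint already imposed by the efficient similarity computation in Section~3, so the hypothesis is not restrictive in practice. If a stronger claim for general $\psi$ were desired, one would have to retreat to an approximate independence bound whose slack shrinks as $\psi$ approaches the next power of two; this would follow from comparing the induced bit distribution to the uniform distribution on $\{0,1\}^{\lceil \log_2\psi\rceil}$ in total variation.
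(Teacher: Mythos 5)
Your proposal follows the same two-part decomposition as the paper's proof: independence across the $t$ partitions (from the independent choice of subspace and anchor points per partition) and independence of the bits within one partition's code (from the uniform cell probability $1/\psi$ supplied by the entropy-maximization proposition), combined at the end. The one substantive difference is in the within-block step: the paper delegates it to Theorem~1 of the VDeH work, which establishes that any \emph{two} bits from the same encoded partition are independent, whereas you run the counting argument over an arbitrary subset $S$ of bit positions, obtaining $\Pr(\forall s\in S: b_s=\beta_s)=2^{-|S|}$ directly. Since pairwise independence does not in general imply mutual independence, your version is actually the cleaner route to the proposition as literally stated ("all bits are mutually independent"); it costs nothing extra because the preimage count $2^{k-|S|}$ is just as easy to compute for general $S$ as for $|S|=2$. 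Your explicit flagging of the $\psi=2^k$ assumption also matches the paper, which makes the same assumption ("a partition of size $\psi=2^w$") and, like you, justifies it by the byte-aligned implementation. In short: same approach, with a slightly stronger and self-contained intra-block argument.
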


\begin{proof}
    The bit independence criterion in VDeH relies on two aspects, both of which are maintained in {\modelnamevd}. We assume the standard encoded hashing mechanism, which maps a cell index from a partition of size $\psi=2^w$ to a $w$-bit binary code.

    First, {\modelnamevd} holds the independence between partitions. As with VDeH, each of the $t$ partitions in {\modelnamevd} is constructed independently. This independence is further reinforced in our case by the fact that each partition is constructed not only with a new set of random anchor points but also in a new, independently chosen random subspace. This ensures the statistical independence of the $w$-bit codes generated from different partitions.
    Second, {\modelnamevd} holds the independence within a partition. The proof for this case, detailed in Theorem 1 of the existing work~\cite{xu2025VDeH}, is conditional on the entropy maximization criterion holding. Specifically, it requires the probability of a point falling into any of the $\psi$ cells to be $1/\psi$, which is also satisfied in {\modelnamevd}. Since the prerequisite holds and the same encoding logic is applied, the conclusion of Theorem 1 in VDeH directly applies: any two bits generated from the same encoded partition are mutually independent.
\end{proof}

\subsection{Proof of Criteria of IKE (iForest)}\label{app:properties_of_ike}

For iForest-based IKE, as with {\modelnamevd}, the full space coverage criterion is naturally satisfied, while the bit independence criterion depends on the entropy maximization.

First, an iTree partitions space using axis-aligned hyperplanes. The set of all leaf nodes forms a collection of disjoint hyper-rectangles whose union completely tiles the $\mathbb{R}^d$ space. Any point $x \in \mathbb{R}^d$ can be deterministically routed through the tree's splits to a unique leaf node. Thus, full space coverage is satisfied, irrespective of the data distribution.

\begin{thm}
    Under the assumption of a uniform data distribution, the hashing scheme of IKE satisfies the entropy maximization and the bit independence criteria.
\end{thm}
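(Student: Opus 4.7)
The plan is to mirror the structure of the proofs for VDeH and {\modelnamevd} in Appendix~\ref{app:properties_of_ike_vd}, replacing the Voronoi-diagram-specific arguments with analogues appropriate for the axis-aligned, randomly-split iTree partitions. Both entropy maximization and bit independence will ultimately follow from a symmetry/exchangeability argument on the $\psi$ sample points used to build an iTree, together with the independent construction of the $t$ iTrees in the forest.

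For entropy maximization, I would first fix an iTree $H_i$ built from a subsample $\mathcal{D}_i=\{s_1,\dots,s_\psi\}$ drawn i.i.d.\ from the uniform data distribution, and let $p_j$ denote the probability mass (under that same uniform distribution) of the leaf that isolates $s_j$. Because the $s_j$ are i.i.d.\ (hence exchangeable) and the iTree-construction algorithm is symmetric in its inputs --- it picks features and split values at random without regard to the identity or ordering of the sample points --- the joint distribution of $(p_1,\dots,p_\psi)$ is invariant under permutations of the index set. Consequently $\mathbb{E}[p_j]$ is the same for every $j$, and since $\sum_{j=1}^{\psi} p_j=1$ almost surely, we conclude $\mathbb{E}[p_j]=1/\psi$. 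This is exactly the analogue of Lemma~1 of \cite{xu2025VDeH} transported from Voronoi cells to iTree leaves; the axis-aligned random splits never break the exchangeability of the sample points, so the probability that any new uniform point falls into the leaf of $s_j$ is uniform in $j$.

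For bit independence, I would handle between-partition and within-partition independence separately. Independence of bits originating from different iTrees is immediate from the construction: each of the $t$ iTrees uses its own independently drawn subsample $\mathcal{D}_i$ and independent random feature/split choices, so the leaf-index random variables $\Phi^{\mathrm{idx}}_i(\mathbf{x})$ and $\Phi^{\mathrm{idx}}_{i'}(\mathbf{x})$ are independent for $i\ne i'$, and so are any bits derived from them. Within a single iTree, when $\psi=2^w$ the entropy-maximization step above gives $\Pr(\Phi^{\mathrm{idx}}_i(\mathbf{x})=k)=1/\psi$ for each of the $\psi$ leaf indices $k$, so the $w$-bit encoding of $\Phi^{\mathrm{idx}}_i(\mathbf{x})$ is uniform on $\{0,1\}^{w}$; any individual bit is then a balanced indicator over leaf indices and the $w$ bits are mutually independent. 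This parallels the structure of Theorem~1 in \cite{xu2025VDeH}.

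The main obstacle is justifying the exchangeability step in entropy maximization. Unlike a Voronoi diagram, whose cells are symmetric geometric objects determined only by the anchor set, an iTree's leaves are shaped by the entire sequence of random feature and split-value choices, and any \emph{realized} tree typically treats its $\psi$ points very asymmetrically (one leaf may occupy a very thin slab while another occupies a large region). The delicate point is to observe that the randomization over the algorithm's internal coin flips, combined with the i.i.d.\ sampling of $\mathcal{D}_i$ from the uniform distribution, restores the required symmetry \emph{in distribution}: for any permutation $\sigma$ of $\{1,\dots,\psi\}$, the random tree constructed from $(s_{\sigma(1)},\dots,s_{\sigma(\psi)})$ has the same law as the one constructed from $(s_1,\dots,s_\psi)$. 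Once this distributional symmetry is nailed down, the rest of the argument --- the uniform marginal on leaf indices, and the consequent independence of the encoded bits within and across iTrees --- is essentially bookkeeping.
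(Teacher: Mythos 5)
Your proof is correct at the paper's level of rigor, but the entropy-maximization half takes a genuinely different route. The paper works node by node: under the uniform distribution, the probability that a point goes left at a split with value $p$ drawn uniformly from $[m_q,M_q]$ is $\frac{p-m_q}{M_q-m_q}$, whose expectation over $p$ is $\frac{1}{2}$; multiplying $\frac{1}{2}$ along each root-to-leaf path of the (implicitly complete, $\psi=2^w$) tree gives $\Pr(x\in V_i)=1/\psi$ for every \emph{path-indexed} leaf. You instead transport Lemma~1 of VDeH directly: exchangeability of the i.i.d.\ anchors plus permutation-equivariance of the set-based construction gives $\mathbb{E}[p_j]=1/\psi$ for the leaf isolating $s_j$. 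Your argument is arguably cleaner and strictly more general --- it never uses uniformity of the data distribution, only that the query and anchors share the same law --- whereas the paper's split-by-split computation genuinely needs the uniformity assumption (and quietly identifies the data range $[m_q,M_q]$ with the cell's geometric extent). The one thing to tighten is an indexing mismatch: your uniformity is over leaves indexed by \emph{which anchor they isolate}, while the encoding $\Phi^{\mathrm{idx}}_i$ and the subsequent bit-independence bookkeeping (identical in both proofs: independence across trees by construction, independence within a tree because the $\psi=2^w$ codes exhaust $\{0,1\}^w$ and exactly $\psi/4$ satisfy any two-bit constraint) require the code assignment to leaves to be fixed. Uniform expected mass over anchor-indexed leaves does not automatically yield uniform probability over path-indexed leaves, since the bijection between the two indexings is tree-dependent; you should either declare that codes are assigned by anchor index, or add the paper's path argument to cover the positional encoding actually used.
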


\begin{proof}
    Let an iTree be constructed to have $\psi$ leaf nodes $\{V_1, \dots, V_{\psi}\}$. We need to show that for an arbitrary point $x$, the probability of it landing in any leaf node $V_i$ is uniform, i.e., $\text{Pr}(x \in V_i) = 1/\psi$.
    Consider a single split at an arbitrary node in the tree. The split value $p$ is chosen uniformly at random from the range $[m_q,M_q]$, where $m_q$ and $M_q$ are the minimum and maximum values of the data points within that node along the chosen dimension $q$. Under the uniform data distribution, the probability mass within any sub-region is proportional to its volume. When a split point $p$ is chosen uniformly within the range $[m_q,M_q]$, the probability that a point $x$ within this region falls on the left side of the split ($x_q<p$) is given by $\frac{p - m_q}{M_q - m_q}$. The expected probability of going left, over the random choice of $p$, is:
    \begin{align*}
        E_p\left[\text{Pr}(x \text{ goes left})\right] &= E_p\left[\frac{p - m_q}{M_q - m_q}\right] \\&= \frac{E_p[p] - m_q}{M_q - m_q}.
    \end{align*}
    Since $p$ is uniform in $[m_q,M_q]$, its expectation is $E_p[p] = \frac{m_q + M_q}{2}$. Substituting this in, we get:
    \begin{align*}
        E_p\left[\text{Pr}(x \text{ goes left})\right] &= \frac{\frac{m_q + M_q}{2} - m_q}{M_q - m_q} \\&= \frac{\frac{M_q - m_q}{2}}{M_q - m_q} = \frac{1}{2}.
    \end{align*}
    Thus, at each node, a data point has an equal probability of $1/2$ of being routed to either child node. For a point to reach a specific leaf node, it must follow a unique path of splits from the root. The probability of any specific path is therefore uniform across all leaves, leading to $\text{Pr}(x \in V_i) = 1/\psi$, which establishes the entropy maximization criterion.
    
    The bit independence criterion follows from the establishment of entropy maximization. Assuming the standard encoded hashing mechanism where a leaf index from a partition of size $\psi=2^w$ is mapped to a $w$-bit binary code. The proof considers two cases: (i) The independence between partitions is guaranteed because each of the $t$ iTrees in the ensemble is constructed independently. This ensures the statistical independence of the $w$-bit codes generated from different trees. (ii) The independence within a single partition. Let $\alpha$ and $\beta$ be two distinct bits in the $w$-bit vector generated from one iTree. For any values $v_\alpha, v_\beta \in \{0,1\}$, the joint probability is the sum of probabilities of landing in all leaf nodes $V_i$ whose code satisfies the condition: 
    \begin{align*} 
    &\text{Pr}(\alpha=v_\alpha, \beta=v_\beta) \\&=\sum_{i=1}^{\psi} \text{Pr}(x \in V_i) \cdot \mathbb{I}(\alpha=v_\alpha, \beta=v_\beta \text{ for cell } i) \\ &= \frac{1}{\psi} \sum_{i=1}^{\psi} \mathbb{I}(\alpha=v_\alpha, \beta=v_\beta).
    \end{align*} 
    Since the $\psi=2^w$ binary codes exhaust all possible $w$-bit vectors, exactly $\psi/4$ of them will satisfy the condition that two specific bits have the values $v_\alpha$ and $v_\beta$. Therefore: 
    $$ \text{Pr}(\alpha=v_\alpha, \beta=v_\beta) = \frac{1}{\psi} \left(\frac{\psi}{4}\right) = \frac{1}{4}.$$
    Similarly, the marginal probabilities are: 
    $$ \text{Pr}(\alpha=v_\alpha) = \frac{1}{\psi} \sum_{i=1}^{\psi} \mathbb{I}(\alpha=v_\alpha) = \frac{1}{\psi} \left(\frac{\psi}{2}\right) = \frac{1}{2},$$ 
    $$ \text{Pr}(\beta=v_\beta) = \frac{1}{\psi} \sum_{i=1}^{\psi} \mathbb{I}(\beta=v_\beta) = \frac{1}{\psi} \left(\frac{\psi}{2}\right) = \frac{1}{2}.$$ 
    Since $\text{Pr}(\alpha=v_\alpha, \beta=v_\beta) = \text{Pr}(\alpha=v_\alpha) \cdot \text{Pr}(\beta=v_\beta)$, the bits are mutually independent. This completes the proof.
\end{proof}

Therefore, under the uniform distribution, IKE satisfies three key criteria, which is to implement effective binary codes.

In fact, LLM embedding is non-uniform, which makes IKE appear to perform poorly due to not strictly satisfying the two criteria. However, IKE's strength lies in its use of a large ensemble of $t$ iTrees. While a single tree may produce biased partitions and correlated bits, the final similarity score is an aggregation over thousands of independent random processes. The biases and errors of individual trees are averaged out, yielding a robust model where the strong independence between trees compensates for the lack of strict independence within them. In our experience, IKE is even better than {\modelnamevd} when $m=1$, which satisfies all three criteria.

\subsection{Diverse Partitioning}\label{app:diverse_partitioning}
It is noteworthy that VDeH satisfies the three criteria discussed above, yet it performs poorly on LLM embeddings. This indicates that these criteria alone are insufficient for effective retrieval. In the following, we demonstrate that diverse partitioning accounts for the superior retrieval performance of \modelname{} compared to VDeH.

\paragraph{\textbf{Diverse partitioning as the fourth criterion.}} As VDeH and IKE can be expressed as ensemble methods of a similarity kernel function $K(\mathbf{x}, \mathbf{y})$~\cite{ting2018isolation,ting2024nearest}, the superiority of IKE stems from a key principle in ensemble learning: diverse, simple base learners often outperform ensembles of complex, correlated ones. 

\begin{definition}[Base Partitioner] 
A base partitioner $h(\mathbf{x}; \Theta)$ maps $\mathbf{x} \in \mathbb{R}^d$ to a discrete partition, with $\Theta$ as a random variable capturing all randomness:
\begin{itemize} 
\item For VDeH, $\Theta_{\text{VDeH}}$ consists of $\psi$ randomly sampled anchor points from the corpus. 
\item For IKE, $\Theta_{\text{IKE}}$ includes random dimension selections, and value splits during iTree construction. \end{itemize} 
\label{def:base_partitioner} 
\end{definition} 

\begin{definition}[Isolation Kernel] 
Given a base partitioner $h(\cdot; \Theta)$ (Definition~\ref{def:base_partitioner}), the Isolation Kernel $K_\psi(\mathbf{x}, \mathbf{y})$ is the expectation that $\mathbf{x}$ and $\mathbf{y}$ are mapped to the same partition:

\begin{equation} 
K_\psi(\mathbf{x}, \mathbf{y}) = \mathbb{E}_{\Theta}[\mathbb{I}(h(\mathbf{x}; \Theta) = h(\mathbf{y}; \Theta))],
\end{equation} 
\noindent This is consistent with the definition in Section~\ref{sec:preliminaries}, re-expressed using the base partitioner notation.
\label{def:ideal_kernel} 
\end{definition} 

Intuitively, if two points $\mathbf{x}$ and $\mathbf{y}$ are close in the data space, they are more likely to fall into the same isolating partition, yielding a larger kernel value $K_\psi(\mathbf{x}, \mathbf{y})$. Conversely, distant points are less likely to co-occur in a partition, yielding a smaller value.

In practice, both IKE and VDeH estimate $K_\psi(\mathbf{x}, \mathbf{y})$ using a finite ensemble of $t$ base partitioners, $\{h_1, \dots, h_t\}$. Each partitioner $h_i$ is an instance $h(\cdot; \Theta_i)$, generated from an i.i.d.\ realization $\Theta_i$ of the random parameter $\Theta$: \begin{equation} K(\mathbf{x}, \mathbf{y}; t) = \frac{1}{t} \sum_{i=1}^{t} \mathbb{I}(h(\mathbf{x}; \Theta_i) = h(\mathbf{y}; \Theta_i)). \end{equation}
The quality of the estimator $K(\mathbf{x}, \mathbf{y}; t)$ is measured as follows:
\begin{equation}
\begin{split}
    \text{MSE}(K(\mathbf{x}, \mathbf{y}; t)) &= \mathbb{E}\left[(K(\mathbf{x}, \mathbf{y}; t) - K_\psi(\mathbf{x}, \mathbf{y}))^2\right] \\
    &= \text{Bias}^2 + \text{Variance}.
\end{split}
\label{eq:mse_decomp}
\end{equation}

Since each $Z_i = \mathbb{I}(h(\mathbf{x}; \Theta_i) = h(\mathbf{y}; \Theta_i))$ is i.i.d.\ with $\mathbb{E}[Z_i] = K_\psi(\mathbf{x}, \mathbf{y})$, the sample mean $K(\mathbf{x}, \mathbf{y}; t) = \frac{1}{t}\sum_{i}Z_i$ is an unbiased estimator of $K_\psi(\mathbf{x}, \mathbf{y})$. Thus, $\text{MSE}(K(\mathbf{x}, \mathbf{y}; t))$ is determined entirely by the variance term. Let $p = K_\psi(\mathbf{x}, \mathbf{y})$ and $\sigma^2 = p(1-p)$. The variance of the estimator is
\begin{equation}
\begin{split}
    &\text{Var}(K(\mathbf{x}, \mathbf{y}; t)) \\
    &= \frac{1}{t^2} \left( \sum_{i=1}^{t} \text{Var}(Z_i) + \sum_{i \neq j} \text{Cov}(Z_i, Z_j) \right).
\end{split}
\label{eq:bernoulli_var}
\end{equation}

Letting $\rho = \text{Cov}(Z_i, Z_j)/\sigma^2$ for $i \neq j$ gives the following standard identity for the variance of the mean of identically distributed random variables~\cite{breiman2001random}:
\begin{equation}
    \text{Var}(K(\mathbf{x}, \mathbf{y}; t)) = \sigma^2 \left( \frac{1-\rho}{t} + \rho \right).
\label{eq:breiman_variance}
\end{equation}

\begin{col}[Asymptotic Variance Limit]
As the ensemble size $t \to \infty$, the variance of $K(\mathbf{x}, \mathbf{y}; t)$ converges to:
\begin{equation}
    \lim_{t\to\infty} \text{Var}(K(\mathbf{x}, \mathbf{y}; t)) = \rho\sigma^2
\label{eq:asymptotic_variance}
\end{equation}
\end{col}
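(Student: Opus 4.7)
The plan is to obtain the corollary as a direct consequence of the closed-form variance in Equation~\ref{eq:breiman_variance}, which expresses $\text{Var}(K(\mathbf{x}, \mathbf{y}; t))$ as $\sigma^2\bigl((1-\rho)/t + \rho\bigr)$. First I would split this into two additive terms, $\tfrac{(1-\rho)\sigma^2}{t}$ and $\rho\sigma^2$, with the latter independent of $t$.

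Next, I would argue that $\rho$ and $\sigma^2$ are fixed quantities determined by the pair $(\mathbf{x},\mathbf{y})$ and the distribution of the random parameter $\Theta$, and hence do not depend on $t$. Because each $Z_i$ is a Bernoulli indicator with $p = K^*(\mathbf{x},\mathbf{y})$, we have $\sigma^2 = p(1-p) \in [0,1/4]$, and the correlation $\rho$ between two independently drawn partitioners $\Theta_i, \Theta_j$ lies in $[-1,1]$. Thus $(1-\rho)\sigma^2$ is a bounded constant in $t$, so $\tfrac{(1-\rho)\sigma^2}{t} \to 0$ as $t \to \infty$. Applying linearity of limits to the split gives $\lim_{t\to\infty} \text{Var}(K(\mathbf{x},\mathbf{y};t)) = \rho\sigma^2$, which is exactly the claim.

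There is no substantive obstacle here: once Equation~\ref{eq:breiman_variance} is in hand, the corollary reduces to a one-line limit computation. The conceptual weight lies entirely in the earlier derivation of the covariance-structured decomposition leading to Breiman's identity; the role of this corollary is to isolate the irreducible variance floor $\rho\sigma^2$ that no finite ensemble size can eliminate. This in turn motivates the fourth criterion: a smaller $\rho$, i.e.\ higher partitioner diversity, tightens the asymptotic variance, which is precisely the theoretical handle used to explain IKE's advantage over VDeH in the subsequent discussion.
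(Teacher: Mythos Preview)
Your proposal is correct and matches the paper's approach exactly: the paper states the corollary immediately after Equation~\ref{eq:breiman_variance} without a separate proof, treating it as the obvious limit of $\sigma^2\bigl((1-\rho)/t + \rho\bigr)$ as $t\to\infty$. Your write-up simply makes explicit the one-line limit computation the paper leaves implicit.
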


Thus, the quality of the estimator $K(\mathbf{x}, \mathbf{y}; t)$ is ultimately constrained by the correlation $\rho$.

\begin{thm}
Assuming that the dimensions in the input space are independent, the correlations of the VDeH, {\modelnamevd}, and IKE partitioning schemes satisfy the following inequality for dimension $d>1$:
\begin{equation}
\rho_{\text{VDeH}} > \rho_{\text{IKE}_{\text{VD}}(m=1)} > \rho_{\text{IKE}} \ge 0.
\end{equation}
\label{thm:rho}
\end{thm}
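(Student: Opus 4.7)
The plan is to first reduce $\rho$ to the variance of the ideal kernel over random pairs $(\mathbf{x}, \mathbf{y})$, and then compare this variance across the three schemes using the dimension-independence assumption.

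Since $\Theta_i \perp \Theta_j$ by Definition~\ref{def:ensemble_kernel}, conditioning on $(\mathbf{x}, \mathbf{y})$ gives $Z_i \perp Z_j \mid (\mathbf{x}, \mathbf{y})$, so
\begin{equation*}
\text{Cov}(Z_i, Z_j) = \mathbb{E}_{\mathbf{x},\mathbf{y}}\bigl[K^*(\mathbf{x},\mathbf{y})^2\bigr] - p^2 = \text{Var}_{\mathbf{x},\mathbf{y}}\bigl(K^*(\mathbf{x},\mathbf{y})\bigr).
\end{equation*}
Because entropy maximization (Appendix~\ref{app:properties_of_ike_vd} and Appendix~\ref{app:properties_of_ike}) yields $p = 1/\psi$ for all three schemes, proving the ordering of $\rho$ reduces to proving the ordering of $\text{Var}_{\mathbf{x},\mathbf{y}}(K^*(\mathbf{x},\mathbf{y}))$.

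For the first inequality $\rho_{\text{VDeH}} > \rho_{\text{IKE}_{\text{VD}}(m=1)}$, I would exploit dimension independence to write $K^*_{\text{IKE}_{\text{VD}}(1)}(\mathbf{x},\mathbf{y}) = \tfrac{1}{d}\sum_{j=1}^{d} K^*_{1,j}(x_j, y_j)$, where $K^*_{1,j}$ is the one-dimensional VDeH kernel on coordinate $j$. Since the pairs $(x_j, y_j)$ are independent across $j$, the $d$ summands are i.i.d., so $\text{Var}(K^*_{\text{IKE}_{\text{VD}}(1)}) = \tfrac{1}{d}\text{Var}(K^*_{1,1})$. The full VDeH kernel instead uses the nearest-anchor rule in $\mathbb{R}^d$, which couples all coordinates; by expressing the same-cell event as a single integral over the anchor distribution, I would show $\text{Var}(K^*_{\text{VDeH}}) > \text{Var}(K^*_{1,1}) > \tfrac{1}{d}\text{Var}(K^*_{1,1})$ whenever $d > 1$.

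For the second inequality $\rho_{\text{IKE}_{\text{VD}}(m=1)} > \rho_{\text{IKE}}$ together with the non-negativity bound $\rho_{\text{IKE}} \ge 0$, I would condition on the sequence of split dimensions visited by $\mathbf{x}$ and $\mathbf{y}$ along their root-to-leaf paths in an iTree. Each of the $\lceil \log_2 \psi \rceil$ splits draws an independent random dimension, so dimension independence produces even further smoothing: $K^*_{\text{IKE}}$ becomes a mixture over exponentially many split-dimension sequences, each contributing a conditionally factorizing agreement event. This additional averaging yields $\text{Var}(K^*_{\text{IKE}}) < \text{Var}(K^*_{\text{IKE}_{\text{VD}}(1)})$, while $\rho_{\text{IKE}} \ge 0$ is immediate because variances are non-negative.

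The main obstacle is the strict inequality $\text{Var}(K^*_{\text{VDeH}}) > \text{Var}(K^*_{1,1})$: the $d$-dim nearest-anchor indicator does not factor into 1D events, so a simple coordinate-factorization argument is unavailable. A plausible route is to write the same-cell probability as a single expectation over the random anchor configuration and show that its variance (over the pair distribution) is strictly monotone in $d$ under a product data distribution. A parallel subtlety appears in the iForest step, where the split dimensions sampled along the tree path depend on the splits already taken, and decoupling this hierarchical randomness cleanly is the other nontrivial piece.
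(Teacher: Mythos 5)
Your reduction of $\text{Cov}(Z_i,Z_j)$ to $\text{Var}_{\mathbf{x},\mathbf{y}}(K^*(\mathbf{x},\mathbf{y}))$ is a coherent way to make the shared randomness precise, and for the middle scheme your coordinate factorization $K^*_{\text{IKE}_{\text{VD}}(1)}=\tfrac{1}{d}\sum_j K^*_{1,j}$ with i.i.d.\ summands gives $\tfrac{1}{d}\text{Var}(K^*_{1,1})$ --- numerically the same $1/d$ attenuation the paper obtains. The paper's route is different: it fixes $(\mathbf{x},\mathbf{y})$, applies the law of total covariance conditioned on the two partitioners' dimension choices $(q_i,q_j)$, kills the second term because $\mathbb{E}[Z\mid q]$ is constant under dimension independence, and splits the first term into the event $q_i=q_j$ (probability $1/d$, conditional covariance $\sigma^2\rho_{\text{data}}$) and $q_i\neq q_j$ (zero covariance by independence of disjoint coordinates), yielding $\rho_{\text{IKE}_{\text{VD}}(m=1)}=\rho_{\text{data}}/d$ exactly. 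Crucially, the paper then simply \emph{defines} $\rho_{\text{VDeH}}=\rho_{\text{data}}$, i.e., it identifies the full-dimensional VDeH correlation with the same quantity appearing in the conditional covariance, so the first strict inequality is immediate by fiat rather than derived.

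This is exactly where your proposal has a genuine gap, and you correctly flag it: you need $\text{Var}(K^*_{\text{VDeH}})>\tfrac{1}{d}\text{Var}(K^*_{1,1})$, and you propose to get it via the stronger claim $\text{Var}(K^*_{\text{VDeH}})>\text{Var}(K^*_{1,1})$, which you do not establish. That claim is not obviously true: under a product distribution in high dimension, pairwise distances concentrate, the nearest-anchor assignment becomes nearly exchangeable among the $\psi$ anchors, and the same-cell probability can flatten toward $1/\psi$ for most pairs, which would \emph{shrink} $\text{Var}_{\mathbf{x},\mathbf{y}}(K^*_{\text{VDeH}})$ rather than grow it. So monotonicity in $d$ cannot be asserted without further argument, and without it your first inequality does not close. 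Two smaller issues: (i) your reduction from comparing $\rho$ to comparing covariances needs $\sigma^2=p(1-p)$ to be common across schemes, but the paper proves $p=1/\psi$ for iForest only under a uniform data distribution, which is not among the theorem's hypotheses (only dimension independence is); (ii) your treatment of the second inequality $\rho_{\text{IKE}_{\text{VD}}(m=1)}>\rho_{\text{IKE}}$ is as heuristic as the paper's own one-sentence assertion that the extra split randomness ``further reduces the correlation,'' so neither version is rigorous there, but you have not fallen behind the paper on that step. The non-negativity $\rho_{\text{IKE}}\ge 0$ does follow immediately in your framing, since it is a variance.
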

\begin{proof}
    For VDeH, the randomness is driven solely by the data sample $D_\psi$. Therefore, its correlation is entirely determined by the data, yielding $\rho_{\text{VDeH}} = \rho_{\text{data}}$. For \modelnamevd{}$(m=1)$, randomness comes from $D_\psi$ and a random dimension choice $q \in \{1,\dots,d\}$. We analyze its covariance $Cov_{VD,1}(Z_i, Z_j)$ using the law of total covariance, conditioning on the dimension choices $q_i, q_j$:
    \begin{equation}
\begin{split}
    &\mathbb{E}_{q_i,q_j}[Cov(Z_i, Z_j | q_i, q_j)] \\ 
    &\quad + Cov_{q_i,q_j}(\mathbb{E}[Z_i|q_i], \mathbb{E}[Z_j|q_j]).
\end{split}
\end{equation}

    By dimension independence, $\mathbb{E}[Z|q]$ is constant, making the second term zero.  The covariance is thus equal to the first term:
    \begin{itemize}
    \item {Case A ($q_i = q_j$):} Occurs with probability $1/d$, giving \\ $Cov(Z_i, Z_j | q_i=q_j) = \sigma^2 \rho_{\text{data}}$.
    \item {Case B ($q_i \neq q_j$):} Occurs with probability $1 - 1/d$, yielding zero conditional covariance due to dimension independence.
    \end{itemize}
    
    Combining these cases yields:
    \begin{equation}
    \rho_{\text{IKE}_{\text{VD}}(m=1)} = \frac{Cov_{VD,1}(Z_i, Z_j)}{\sigma^2} = \frac{\rho_{\text{data}}}{d}.
    \end{equation}
    
    For $d > 1$, we have the following inequality
    \begin{equation}
      \rho_{\text{VDeH}} > \rho_{\text{IKE}_{\text{VD}}(m=1)}.  
    \label{eq:rho_left}
    \end{equation}
    
    The randomness in IKE, $\Theta_{\text{IKE}} = (D_\psi, \mathcal{R})$, involves the data sample $D_\psi$ and a rich set of subsequent random choices $\mathcal{R}$ to construct the entire iTree (i.e., random dimension selections and value splits). 
    This randomness further reduces the correlation $\rho$ between partitioners. Thus, we have
    \begin{equation}
    \rho_{\text{IKE}_{\text{VD}}(m=1)} > \rho_{\text{IKE}} \ge 0.
    \label{eq:rho_right}
    \end{equation}
    Combing Eq.(\ref{eq:rho_left}) and Eq.(\ref{eq:rho_right}) completes the proof.
\end{proof}

Theorem~\ref{thm:rho} shows that the randomization process employed by IKE makes it superior in generating diverse base partitioners, leading to a robust binary hash representation. 

\section{Method Details}\label{app:method_details}
This section provides details of our method. The code is publicly available at: \url{https://github.com/Zed-Zed-b/IK_Emb}.

\subsection{Construction Details of Isolation Tree (iTree)}\label{app:method_details:iTree}
Given a dataset \(\mathcal{D} = \{\mathbf{x}_1,\dots,\mathbf{x}_\psi\} \subset \mathbb{R}^d\), one first chooses a feature \(q\) uniformly at random from the set of all features of \(\mathcal{D}\). Next, a split value \(p\) is drawn uniformly from the interval \([\min_q(\mathcal{D}),\max_q(\mathcal{D})]\), where \(\min_q(\mathcal{D})\) and \(\max_q(\mathcal{D})\) denote the minimum and maximum values of feature \(q\) over all points in \(\mathcal{D}\). The dataset \(\mathcal{D}\) is then partitioned into two disjoint subsets: \(\mathcal{D}_{\ell} = \{\mathbf{x}\in \mathcal{D} \mid x_q < p\}\), \(\mathcal{D}_{r} = \{\mathbf{x}\in \mathcal{D} \mid x_q \ge p\}\).
This splitting process is applied recursively to \(\mathcal{D}_{\ell}\) and \(\mathcal{D}_{r}\), with the current tree height \(e\) incremented by one at each recursive call. Recursion terminates when any of the following conditions is satisfied: (i) \(\lvert \mathcal{D} \rvert = 1\), (ii) the current tree height \(e\) reaches the specified height limit \(l\), or (iii) the chosen split \((q,p)\) fails to produce two nonempty subsets (i.e., \(\mathcal{D}_{\ell} = \varnothing\) or \(\mathcal{D}_{r} = \varnothing\)). The partitioning in iTree is demonstrated in Figure \ref{fig:iforest_partition}. It is noteworthy that the entire construction process of iTree does not require learning. 

\subsection{Formal Description for \modelnamevd{}} \label{app:method_details:ike_vd}
Let $D=[x_1,\dots,x_N]\in\mathbb{R}^{d\times N}$ be the input dataset. An \modelnamevd{} partition is constructed via a three-step process: (1) Randomly selecting \(m\) dimensions (\(m \le d\)); (2) Randomly sampling $\psi$ anchor points $\mathcal{D}=\{s_1,\dots,s_{\psi}\}\subset D$; (3) Constructing a Voronoi Diagram $H_J$ using the only selected dimensions of anchor points. For any point $x\in\mathbb{R}^{d}$, its transformed value is determined by which of the $\psi$ Voronoi cells it falls into (see the demonstration in Figure \ref{fig:VD_partition}).

\subsection{An Example for Efficient Similarity Computation}\label{app:method_details:example}

This section provides a step-by-step example illustrating the efficient computation of matching elements between two bit strings. Let \(D_x\) and \(D_y\) be two 8-bit strings, each segmented into \(n_b = 2\) bit element. Thus, each string consists of \(8/2=4\) elements. The goal is to count the number of positions where the corresponding elements match exactly.

Assume the values of \(D_x\) and \(D_y\) are:
\begin{align*}
    D_x &= 00\;01\;10\;01 \\
    D_y &= 00\;10\;11\;01 \\
\end{align*}

The bitwise XOR operation identifies positions where the two bits differ. A segment of all zeros after XOR indicates a matching element.
\[
    D=D_x \oplus D_y = 00\; 11\; 01\; 00
\]


Checking each segment for all zeros directly is inefficient. Instead, a logical right shift followed by a bitwise OR is used to propagate any '1' within a segment to its rightmost bit. If the rightmost bit is '0' after this operation, it indicates that all bits in this segment are 0 (meaning the element matches):
\begin{align*}
    M & = D \; | \;(D \gg 1) \\
    &= 00\; 11\; 01\; 00 \; | \; 00\;01\;10\;10 \\
    &= 0\textbf{0} \; 1\textbf{1} \; 1\textbf{1} \; 1\textbf{0}
\end{align*}
The bold bits represent the rightmost bit of each segment.

To count segments where the rightmost bit is '0', all other bits (non-rightmosts) in \(M\) are set to '1' by OR-ing with a pre-defined mask. The mask for \(n_b=2\) is 10 10 10 10. For \(n_b=4\), it would be 1110 1110.

\begin{align*}
M &= M \; | \; mask \\    
&=0\textbf{0} \; 1\textbf{1} \; 0\textbf{1} \; 1\textbf{0} \; | \; 10\;10\;10\;10\\
&=1\textbf{0} \; 1\textbf{1} \; 1\textbf{1} \; 1\textbf{0}
\end{align*}

The number of ‘1’ bits in \(M\) is counted using the population count (popcnt) function. The number of matching elements equals the length of \(M\) minus this count, 
\begin{align*}
    n_1 &= \text{popcnt}(M) = 6 \\
    n_{match} &= \text{length}(M) - n_1 = 2
\end{align*}
Thus, between \(D_x\) and \(D_y\), two out of four elements match.

\subsection{Discussion on Database Scalability and Distribution Shifts}

\modelname{} natively supports efficient incremental updates without requiring a full system rebuild. During the initialization phase, \modelname{} utilizes a representative sub-sample set to construct a fixed mapping function $f(x): \mathbb{R}^d \to \{0, 1\}^{t \times \psi}$ based on the iTree structure. Once this mapping function is determined, it functions as a static, data-dependent hashing mechanism. For newly arriving data points, the system only needs to perform a single forward pass to convert the high-dimensional vectors into binary codes using the existing iTrees. These codes are then directly inserted into the existing indexing structures (e.g., HNSW or inverted lists). This property ensures that \modelname{} can seamlessly scale to expanding databases with minimal computational overhead.

A common challenge in industrial-scale retrieval is the evolution of document corpora over time. While the mapping function remains effective as long as the underlying data distribution is relatively stable, a significant distribution shift may necessitate an update to the model to maintain optimal retrieval precision. Unlike learning-based quantization or hashing methods that require expensive retraining on massive datasets, \modelname{} maintains a distinct advantage due to its learning-free nature and extremely low reconstruction cost. In the event of a detected distribution shift, \modelname{} can rapidly reconstruct the mapping function through efficient re-sampling.

To quantify this efficiency, we conducted benchmarks on the Istella22 dataset (1 million data points) using the LLM2Vec model ($t = 4096, \psi = 16$). On the Linux server reported in the Section \ref{sec:Experiments}, a complete round of sample re-sampling and model reconstruction takes only 0.045 s. Furthermore, remapping the entire 1-million-sample corpus with the updated model is completed within 3.21 s. Such high responsiveness makes \modelname{} a highly cost-effective solution for dynamic, industrial-scale retrieval scenarios compared to traditional learning-based approaches.

\section{Experiment Details}
In this section, we provide the details of the datasets and hyperparameter tuning procedure employed in our study.

\begin{figure*}[!ht]
  \centering
  \begin{minipage}{\linewidth}
    \centering
    
    \subfloat[][Istella22 (LLM2Vec)]{\label{app:fig:flat_ann_istella22_l2v}\includegraphics[width=.37\linewidth]{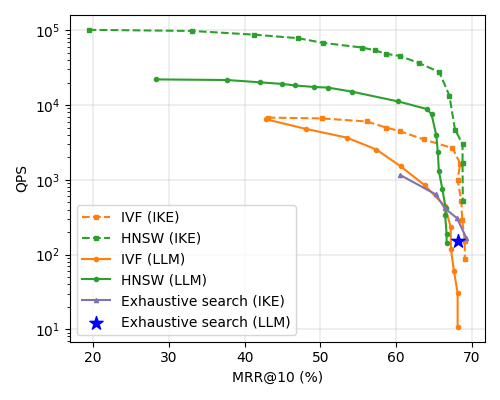}}\hspace{0.1\linewidth}
    \subfloat[][TREC DL 23 (LLM2Vec)]{\label{app:fig:flat_ann_trecdl23_l2v}\includegraphics[width=.37\linewidth]{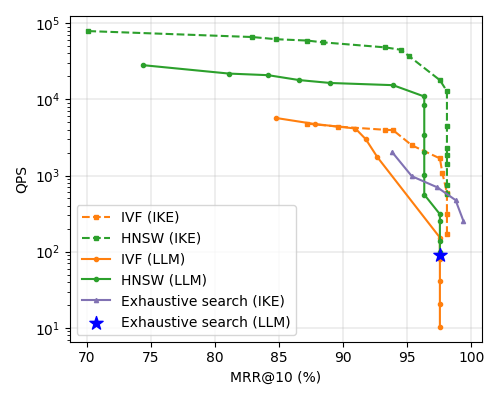}}

  \end{minipage}
  
  \begin{minipage}{\linewidth}
    \centering
    
    \subfloat[][Istella22 (Qwen3)]{\label{app:fig:flat_ann_istella22_qwen3}\includegraphics[width=.37\linewidth]{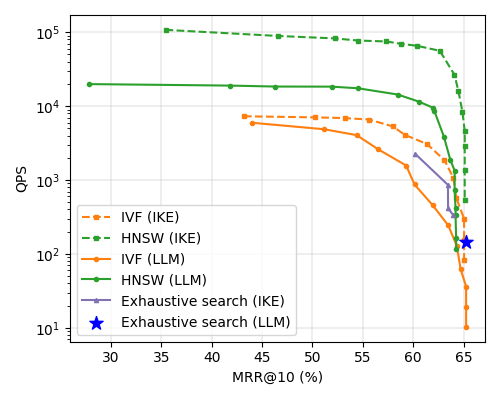}}\hspace{0.1\linewidth}
    \subfloat[][TREC DL 23 (Qwen3)]{\label{app:fig:flat_ann_trecdl23_qwen3}\includegraphics[width=.37\linewidth]{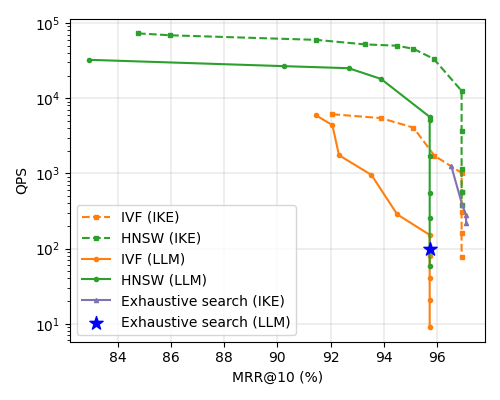}}

  \end{minipage}
  
  \caption{QPS vs. MRR@10 for Four ANN Methods on Istella22 and TREC DL 23.}
  \label{app:exp:fig:ann_flat}
\end{figure*}

\subsection{Dataset Details} \label{app:exp_detail:dataset}
The Massive Text Embedding Benchmark (MTEB)~\cite{muennighoff2022mteb} is a widely used benchmark for validating LLMs' ability to extract text embeddings. We exploit 3 retrieval datasets from the MTEB (eng, v2) benchmark: FiQA2018~\cite{thakur2021beir}, FEVER-HN~\cite{fever}, and Touche2020.V3~\cite{touchev3}. To assess scalability with large corpora, we also include an additional dataset from the MTEB (eng, v1) benchmark: HotpotQA~\cite{hotpotqa}, which has a corpus size of 5.2 million. To further validate the effectiveness of our method on multi-level retrieval datasets, we use the Istella22~\cite{istella22} and TREC DL 23~\cite{trecdl23} datasets (with a 1 million subset extracted from their original corpora). All datasets provide human-annotated relevance labels for each query. In multi-level datasets like Istella22 and TREC DL, documents are rated on a graded relevance scale, ranging from 1 to 5 in Istella22 and from 1 to 4 in TREC DL. Detailed dataset statistics are presented in Table \ref{app:dataset_tat}.

\begin{table}[!ht]
    \centering
    \caption{\textbf{Dataset Statistics}. Datasets are categorized by source. ``HN'' in the dataset name indicates ``Hard Negatives''. Relevance indicates the relationship between query and document: binary score (relevant, non-relevant) or multi-level score, in which each document is assigned the level of relevance w.r.t the query.}
    \label{app:dataset_tat}
    \resizebox{\columnwidth}{!}{%
    \begin{tabular}{@{}cccc@{}}
        \toprule
        \textbf{Dataset} & \textbf{Relevance} & \textbf{Corpus Size} & \textbf{Query Size} \\ 
        \midrule
        \multicolumn{4}{@{}c}{\textit{MTEB}} \\
        \cmidrule(lr){1-4}  

        HotpotQA    & Binary & 5,233,329 & 7,405  \\
        FiQA2018      & Binary & 57,638 & 648 \\
        FEVER-HN & Binary & 163,698 & 1,000 \\
        Touche2020.V3 & 3-Level & 303,732 & 49 \\
        
        \cmidrule(lr){1-4}
        \multicolumn{4}{@{}c}{\textit{Extra Datasets}} \\
        \cmidrule(lr){1-4}  
        
        Istella22 (subset) & 5-Level & 1,000,000 & 234 \\
        TREC DL 23 (subset) & 4-Level & 1,000,000 & 82 \\
        
        \bottomrule
    \end{tabular}
    }
\end{table}

\subsection{Hyperparameter Tuning} \label{app:exp_detail:hyperpara}
The parameter \(\psi\), which controls the number of partitions, is determined for each dataset through a grid search on a validation set, aimed at optimizing retrieval accuracy under an exhaustive search setting. This process is efficient, requiring only a single grid search per dataset. Specifically, we tune over the range \([2,16]\) and select the best value.

Notably, the FEVER-HN and Touche2020.V3 datasets do not provide an official validation split. We construct their validation sets as follows:

\begin{itemize}
\item \textbf{FEVER-HN}: This dataset is a subset derived from the FEVER dataset~\cite{fever}. We therefore adopt the official validation split of the original FEVER dataset as the validation set for FEVER-HN. The official test set of FEVER-HN is subsequently used for final evaluation.
\item \textbf{Touche2020.V3}: We randomly sample 30\% of the official test set to serve as the validation set, using the remaining 70\% for final testing. 
\end{itemize}

\section{Extension of section \ref{sec:comparsion_llm}}

\subsection{Implementation details of combining two ANN methods with IKE}\label{app:exp:llm_emb:ann_details}

\paragraph{IVF ({\modelname})} We use \(k\)-means clustering to partition the data and construct the IVF index in the LLM embedding space. During retrieval, the query's LLM embedding is used to identify the closest clusters, and then the similarity between the query and the data points within these clusters is computed in the {\modelname} space.
\paragraph{HNSW ({\modelname})} The index construction and search processes follow the methodology proposed in the original paper, with the only modification being the replacement of the cosine similarity defined in the LLM embedding space with the similarity defined in the {\modelname} space.

\subsection{Detailed Experimental Configuration}\label{app:exp:llm_emb:config}
For the two ANN algorithms with  {\modelname}, we retain the same value of \(\psi\) in the exhaustive search setting for each dataset. For the IVF index, the number of clusters \(nlist\) is set to 4,096, following the recommendation in Faiss~\cite{douze2024faiss} for million-scale datasets. The hyperparameters for the HNSW index are set as follows: $M=32$ and $efConstruction=500$. In each plot, dotted lines represent retrieval performed in the {\modelname} space, whereas solid lines represent retrieval in the LLM embedding space. The same color indicates the same underlying index structure for direct comparison. Additionally, the blue pentagram markers denote the retrieval performance achieved through exhaustive search in the LLM embedding space. The purple curve corresponds to the retrieval performance of {\modelname} under the exhaustive search setting, which was obtained by tuning the hyperparameter \(t\) while keeping the parameter \(\psi\) fixed.

\subsection{Additional experimental results in the ANN search setting}\label{app:exp:llm_emb:more_exp_res}
Additional experimental results for the ANN search setting are provided in Figure \ref{app:exp:fig:ann_flat}, which shows the QPS-MRR@10 trade-off curves on the Istella22 and TREC DL 23 datasets. 
Consistent with the results presented in the main paper, performing retrieval with IKE using the HNSW index achieves multiple times higher throughput compared to retrieval directly with LLM embeddings. For instance, on the Istella22 (Qwen3) dataset under the around 60\% MRR@10 condition, QPS improves from \(1.1\times10^4\) (LLM) to \(6.5\times10^4\) ({\modelname}), representing an nearly 6\(\times\) acceleration in retrieval speed. Additionally, on some datasets, the peak performance on {\modelname} exceeds that of the LLM, which aligns with the findings presented in~\autoref{tab:comp_with_llm}. These results demonstrate that retrieval in the {\modelname} space provides substantial efficiency gains while achieving comparable or even better accuracy.


\section{Extension of section \ref{sec:learningBased}}\label{app:exp:learn_based}


Figure \ref{app:exp:fig:CSR_IKE} presents the additional experimental results in the exhaustive search setting between CSR and {\modelname} on the Istella22 (LLM2Vec) and TREC DL 23 (LLM2Vec) datasets. The results show that {\modelname} significantly outperforms CSR in terms of search efficiency across all datasets, while achieving comparable or even superior retrieval accuracy.

\begin{figure*}[]
  \centering
  \begin{minipage}{\linewidth}
    \centering
    
    \subfloat[][Istella22 (LLM2Vec)]{\label{app:fig:CSR_istella22_l2v}\includegraphics[width=.40\linewidth]{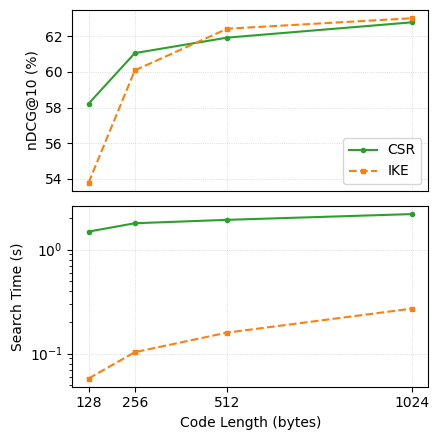}}\hspace{0.1\linewidth}
    \subfloat[][TREC DL 23 (LLM2Vec)]{\label{app:fig:CSR_trecdl23_l2v}\includegraphics[width=.40\linewidth]{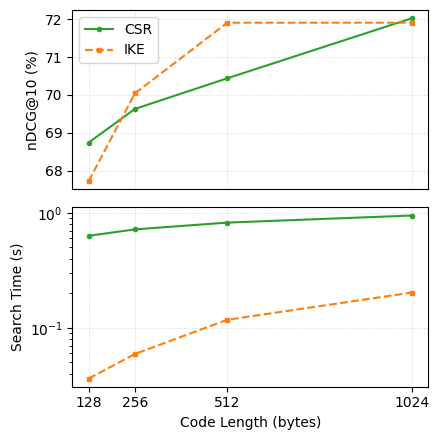}}

  \end{minipage}
  
  \caption{\textbf{Comparison of Retrieval Accuracy and Search Time between CSR and {\modelname} in the exhaustive search setting on the Istella22 (LLM2Vec) and TREC DL 23 (LLM2Vec) datasets}. The search time includes both the mapping time for all queries and the similarity computation time, rather than solely the similarity calculation time as reported in the original CSR paper.}
  \label{app:exp:fig:CSR_IKE}
\end{figure*}

\section{Extension of section \ref{sec:learningFree_with_index}}
\subsection{Description of Compared ANN methods}
\label{app:exp:learn_free:methods}
The compared ANN methods are described in detail as follows:
\begin{itemize}
    \item \textbf{IVF ({\modelname}) / HNSW ({\modelname})}: This refers to two primary ANN methods, operated in the {\modelname} space. The detailed implementations of these methods are the same as those in the previous section.

    \item \textbf{IVF (rpLSH) / HNSW (rpLSH)}: rpLSH is a short name for Random-Projection-based Locality Sensitive Hashing~\cite{rpLSH}, which is a simple yet powerful LSH algorithm. Previous research has demonstrated its superior effectiveness, ranking it first among eleven popular hashing algorithms~\cite{revise_LSH}. It should be noted that Faiss only provides an exhaustive search implementation of rpLSH. To ensure a fair comparison, we integrate rpLSH with the IVF and HNSW indexing structures to perform an ANN search.

    \item \textbf{IVF (PQfs)}: Product Quantization (PQ) \cite{PQ} is widely adopted in practice due to its effective balance between retrieval accuracy and compression capability \cite{diskann-github, johnson2019billion, PQcache, qiu-etal-2022-efficient}. However, conventional PQ implementations can suffer from slower search speeds due to table lookup operations. A recent approach, Product Quantization Fast Scan (PQfs)~\cite{PQfs}, utilizes a SIMD-based fast implementation that significantly accelerates the process, achieving a 4 to 6 times speedup while maintaining the same level of accuracy as standard PQ. IVF (PQfs) integrates PQfs with the IVF indexing structure. We utilize the Faiss implementation of IVF (PQfs) as a baseline for comparison. 

    \item \textbf{ScaNN}: ScaNN~\cite{scann}, developed by Google, employs a novel anisotropic quantization loss function for constructing the quantization codebook. We use the official ScaNN implementation in its ANN search mode for evaluation.
    
\end{itemize}

\subsection{Additional experimental results in the ANN search setting}\label{app:exp:learn_free:more_exp_res}

\sloppy
Figure \ref{app:exp:fig:learn_free_ann} illustrates the QPS versus nDCG@10 trade-off curves for all methods on the HotpotQA (Qwen3), Istella22, and TREC DL 23 datasets, with the code lengths used indicated above each plot. In terms of search efficiency, HNSW (\modelname{}) achieves up to 10\(\times\) higher throughput than other learning-free compression methods in ANN setting, while maintaining comparable retrieval accuracy. For instance, on the Istella22 (Qwen3) dataset with a 512-byte code length, HNSW ({\modelname}) improves throughput from \(1.1\times10^4\) to \(1.5\times10^5\) QPS at around 51\% nDCG@10 compared to ScaNN, resulting in an order-of-magnitude improvement. Furthermore, the IVF ({\modelname}) method attains search efficiency on par with the IVF (PQfs) approach. 


Regarding accuracy, all methods achieve similar top-line performance across most datasets. Notably, our method attains significantly higher retrieval accuracy on the Istella22 (Qwen3) dataset, as shown in Figure \ref{app:fig:compression_istella22_qwen3}, where it outperforms other methods by 3\% at a code length of 512 bytes. It is also noteworthy that the rpLSH method achieves competitive accuracy on several datasets. However, its retrieval speed is substantially lower than that of our approach, primarily due to its embedding mapping process involving high-dimensional matrix multiplications with high computational overhead. As the code length increases, the QPS gap between HNSW (rpLSH) and HNSW ({\modelname}) gradually becomes large. This occurs because the mapping time begins to dominate the total retrieval process of rpLSH, thereby diminishing the efficiency advantage typically offered by the HNSW index.

\begin{figure*}[htbp]
  \centering
  

  \begin{subfigure}{\textwidth}
    \centering
    \includegraphics[width=0.96\linewidth]{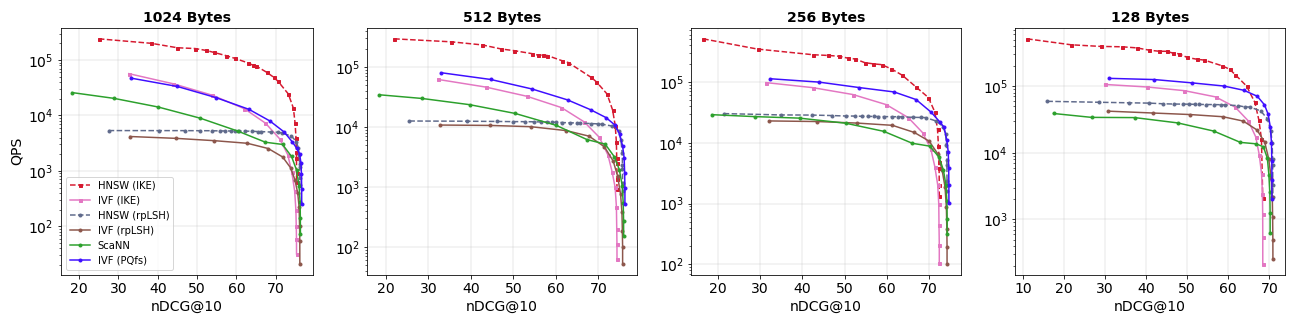}
    \caption{HotpotQA (Qwen3)}
    \label{app:fig:compression_hotpotqa_qwen3}
  \end{subfigure}
  
  \begin{subfigure}{\textwidth}
    \centering
    \includegraphics[width=0.96\linewidth]{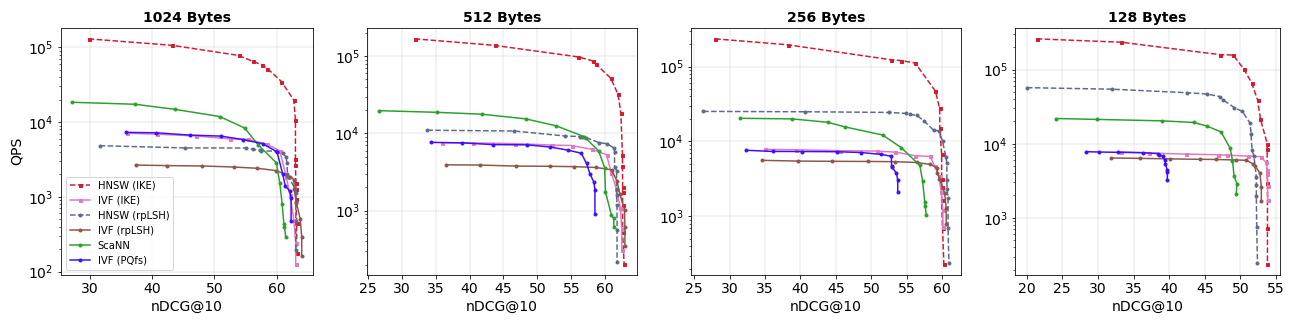}
    \caption{Istella22 (LLM2Vec)}
    \label{app:fig:compression_istella22_l2v}
  \end{subfigure}

  \begin{subfigure}{\textwidth}
    \centering
    \includegraphics[width=0.96\linewidth]{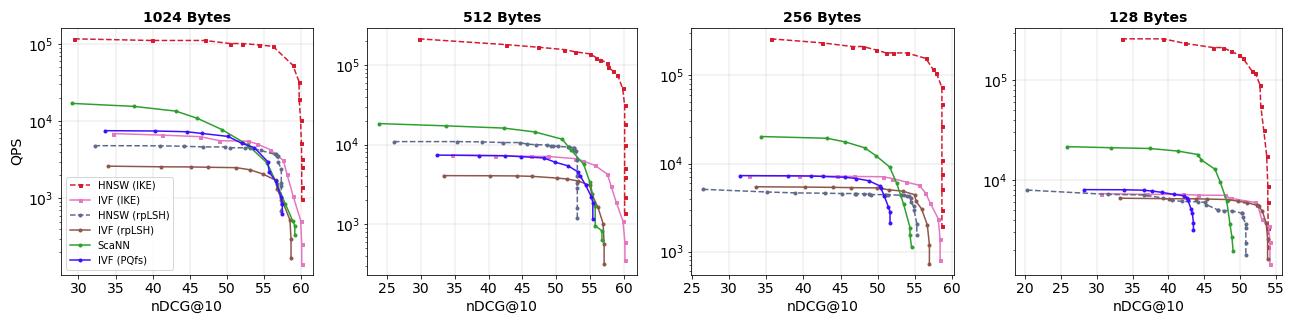}
    \caption{Istella22 (Qwen3)}
    \label{app:fig:compression_istella22_qwen3}
  \end{subfigure}

  \begin{subfigure}{\textwidth}
    \centering
    \includegraphics[width=0.96\linewidth]{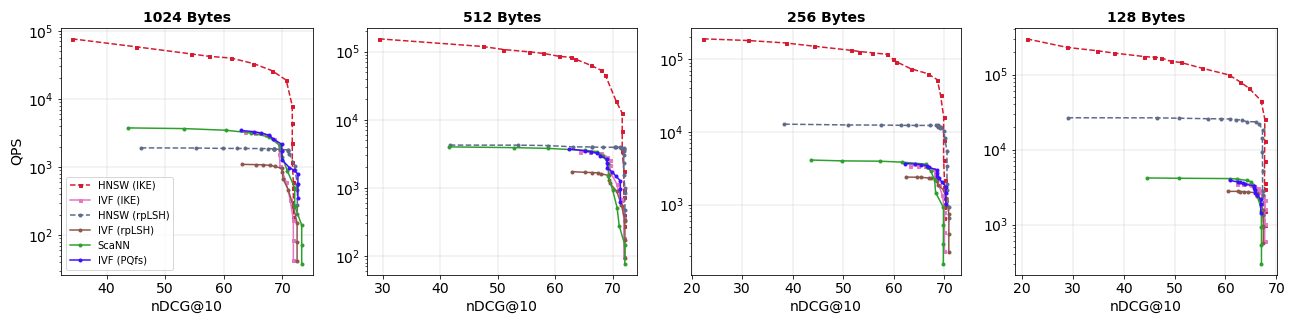}
    \caption{TREC DL 23 (LLM2Vec)}
    \label{app:fig:compression_trecdl23_l2v}
  \end{subfigure}

  \begin{subfigure}{\textwidth}
    \centering
    \includegraphics[width=0.96\linewidth]{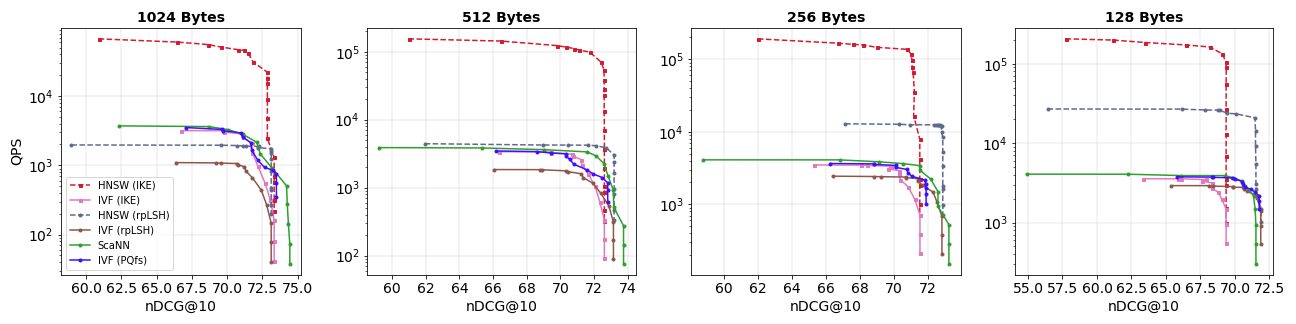}
    \caption{TREC DL 23 (Qwen3)}
    \label{app:fig:compression_trecdl23_qwen3}
  \end{subfigure}

  \caption{\textbf{Performance Comparison of Compression Methods on ANN Retrieval: QPS vs. nDCG@10 on HotpotQA (Qwen3), Istella22, and TREC DL 23}. The curves are obtained by adjusting the search parameter of the index structures.}
    \label{app:exp:fig:learn_free_ann}
\end{figure*}

\end{document}